\numberwithin{equation}{section}
\def\be{\begin{equation}}
\def\ee{\end{equation}}
\def\ba{\begin{array}}
\def\ea{\end{array}}
\newcommand{\bea}{\begin{eqnarray}}
\newcommand{\eea}{\end{eqnarray}}
\newcommand{\bbox}{\lower.2ex\hbox{$\Box$}}
\newtheorem{theorem}{Theorem} 
\newtheorem{definition}{Definition} 
\def\bfone{\relax{\rm 1\kern-.35em 1}}
\def\bfzero{\relax{\rm 0\kern -.45 em 0}}
\tikzstyle{block}=[draw opacity=0.7,line width=1.4cm]
\begin{document}

\title{Infinite $S$-Expansion with Ideal Subtraction and Some Applications}
\author{D. M. Pe\~{n}afiel$^{1,2,3}$\thanks{%
diegomolina@udec.cl}, L. Ravera$^{2,3}$\thanks{%
lucrezia.ravera@polito.it} \\
{\small $^{1}$\textit{Departamento de F\'{\i}sica, Universidad de Concepci\'{o}n,}} \\
{\small Casilla 160-C, Concepci\'{o}n, Chile}\\
{\small $^{2}$\textit{DISAT, Politecnico di Torino}}\\
{\small Corso Duca degli Abruzzi 24, I-10129 Torino, Italia}\\
{\small $^{3}$\textit{Istituto Nazionale di Fisica Nucleare (INFN)}}\\
{\small Sezione di Torino, Via Pietro Giuria 1, 10125, Torino, Italia}}
\maketitle

\vskip 1 cm

\begin{center}
{\small \textbf{Abstract} }
\end{center}

According to the literature, the $S$-expansion procedure involving a finite semigroup is valid no matter what the structure of the original Lie (super)algebra is; However, when something about the structure
of the starting (super)algebra is known and when certain particular conditions are met, the $S$-expansion method (with its features of resonance and reduction) is able not only to lead to several kinds of expanded (super)algebras, but also to reproduce the effects of the standard as well as the generalized In\"on\"u-Wigner contraction.
In the present paper, we propose a new prescription for $S$-expansion, involving 
an infinite abelian semigroup $S^{(\infty)}$ and the subtraction of an infinite ideal subalgebra. We show that the subtraction of the infinite ideal subalgebra corresponds to a reduction. 
Our approach is a generalization of the finite $S$-expansion procedure presented in the literature, and it offers an alternative view of the generalized In\"on\"u-Wigner contraction.  
We then show how to write the invariant tensors of the target (super)algebras in terms of those of the starting ones in the infinite $S$-expansion context presented in this work. We also give some interesting examples of application on algebras and superalgebras.

\vskip 1 cm

\noindent
\textbf{Keywords:} In\"on\"u-Wigner contraction; Lie algebras expansions; Infinite Lie (super)algebras; S-expansion; Lie algebras; Gravity.

\vskip 1 cm \eject
\numberwithin{equation}{section}


\section{Introduction}

The contraction of finite dimensional Lie algebras is a tool that became very well known thanks to the papers of In\"on\"u and Wigner (see Refs. \cite{IW1, Inonu}), in which the authors developed
the so-called \textit{In\"on\"u-Wigner contraction}. This contraction procedure has then been successfully used in order to obtain different (super)algebras from given ones. 
Recently, a generalization of the In\"on\"u-Wigner contraction was obtained in Ref. \cite{lastpat}, by rescaling not only the generators of a Lie superalgebra, but also the arbitrary constants appearing in the components of the invariant tensor.

On the other hand, a method for expanding Lie algebras (\textit{Lie algebras expansion}, also known as \textit{power series expansion}) was introduced in Ref. \cite{Hatsuda} and then studied and applied in diverse works to algebras and superalgebras (see Refs. \cite{Azcarraga1,Azcarraga2,Azcarraga13} for further details).
Subsequently, an alternative to the method of power series expansion, called \textit{$S$-expansion} procedure, was developed (see Refs. \cite{Izaurieta,Izaurieta2,Iza}) and applied in diverse scenarios, both in Mathematics and in Physics.

The $S$-expansion method replicates through the elements of a semigroup the structure of the original algebra into a new one. 
The basis of the $S$-expansion consists, in fact, in combining the inner multiplication law of a discrete set $S$ with the structure of a semigroup, with the structure constants of a Lie algebra $\mathfrak{g}$; The new, larger Lie algebra thus obtained is called \textit{$S$-expanded algebra}, and it is written as $\mathfrak{g}_S= S \times \mathfrak{g}$. 
An important goal of the $S$-expansion procedure is that it allows to write the invariant tensor of the $S$-expanded algebra from the knowledge of the invariant tensor of the original one. 

From the physical point of view, several (super)gravity theories have been extensively studied using the $S$-expansion approach, enabling numerous results over recent years (see Refs. \cite{Iza4, GRCS, CPRS1, Topgrav, BDgrav, CR2, CRSnew, Static, Gen, Ein, Fierro1, Fierro2, Kondrashuk, Artebani, Concha1, Diego, Salgado, Concha2, Caroca:2010ax, Caroca:2010kr, Caroca:2011zz, Andrianopoli:2013ooa, Concha:2016hbt, Concha:2016kdz, Concha:2016tms, Durka:2016eun}). 
Recently, in Ref. \cite{Marcelo2}, an analytic method for $S$-expansion was developed. This method gives the multiplication table(s) of the (abelian) set(s) involved in an $S$-expansion process for reaching a target Lie (super)algebra from a starting one, after having properly chosen the partitions over subspaces of the considered (super)algebras.

There are two facets applicable in the $S$-expansion method, which offer great manipulation on algebras, \textit{i.e.} \textit{resonance} (that transfers the structure of the semigroup to the target algebra, allowing to control its structure with a suitable choice on the semigroup decomposition) and \textit{reduction} (which plays a peculiar role in cutting the algebra properly).

The $S$-expansion procedure involving a \textit{finite} semigroup is valid no matter what the structure of the original Lie (super)algebra is. However, when something about the structure
of the starting (super)algebra is known and when certain particular conditions are met, the $S$-expansion method is able not only to lead to several kinds of expanded (super)algebras, but also to reproduce the effects of the standard as well as the generalized In\"on\"u-Wigner contraction (the latter can also be referred to as ``Weimar-Woods generalized contraction" and involves higher powers of the contraction parameter).
In Ref. \cite{Izaurieta}, the authors explained how information on the subspace structure of the original algebra can be used in order to find resonant subalgebras of the $S$-expanded
algebra and discussed how this information can be put to use
in a different way, namely, by extracting reduced algebras from
the resonant subalgebra. By following this path, the generalized In\"on\"u-Wigner contraction fits within their scheme, and can thus be reproduced by $S$-expansion.

In the present paper, we develop a new prescription for the $S$-expansion procedure, which involves an infinite abelian semigroup and the removal of an (infinite) ideal subalgebra. We will call this procedure \textit{infinite $S$-expansion}; It represents an extension and generalization of the 
finite one and it offers an alternative view of the generalized In\"on\"u-Wigner contraction. Indeed, we will explicitly show how to reproduce a generalized In\"on\"u-Wigner contraction by following our approach.
We then explain how to reconstruct the components of the invariant tensor of the infinitely $S$-expanded algebra with ideal subtraction from those of the initial one. This is useful since it allows to develop the dynamics and construct the Lagrangians of physical theories, starting from their algebraic structure.

Our alternative method differs from the one presented in Ref. \cite{Izaurieta} (that is the finite $S$-expansion), since it involves an infinite semigroup and the subtraction of an infinite ideal subalgebra. Obviously, the use of an infinite semigroup is not the only way for reaching the results we present in this paper (in fact, the examples of application we will consider are already well known from the literature). However, in the present paper we wish to 
describe an alternative path to expansion and contraction, in which, interestingly, the same results obtained in the context of finite $S$-expansion and In\"on\"u-Wigner contraction (where, in particular, the latter does not change the dimension of the original (super)algebra) can be reproduced by acting on the original (super)algebra with an infinite abelian semigroup and by consequently subtracting an infinite ideal. 
The subtraction of the infinite ideal subalgebra here is crucial, since it allows to obtain Lie (super)algebras with a finite number of generators, after having infinitely expanded the original Lie (super)algebras. 

Furthermore, we will see that the prescription for infinite $S$-expansion with ideal subtraction leads to reduced algebras, since the subtraction of the infinite ideal can be viewed as a reduction involving an infinite number of semigroup elements that, together with the generators associated, play the role of ``generating zeros". In this sense, the subtraction of the infinite ideal can be viewed as a generalization of the $0_S$-reduction. Let us observe that the role of ``generating zeros" in the $0_S$-reduction is played by the zero element, while in our method is the whole ideal subalgebra that plays this role; Thus, the concept, in this sense, is slightly different. However, we will show that our method reproduces the same result of a $0_S$-reduction.

This paper is organized as follows: In Section \ref{Review}, we give a review of the $S$-expansion procedure, with reduction and resonance, and of the In\"on\"u-Wigner contraction. We also extend the concept of normal subgroup of a group to the concept of ideal subalgebra of an algebra, since the latter will be useful in the rest of this work. In Section \ref{Generalized}, we briefly review the way in which a standard In\"on\"u-Wigner contraction can be reproduced with a finite $S$-expansion procedure (with resonance and $0_S$-reduction). Then, we discuss how the authors of \cite{Izaurieta} used the information on the subspace structure of the original Lie algebra to extract reduced algebras from resonant subalgebras of $S$-expanded algebras, being able to reproduce, in this way, the generalized In\"on\"u-Wigner contraction. Subsequently, we describe our prescription for infinite $S$-expansion with ideal subtraction and we explain how the generalized In\"on\"u-Wigner contraction is reproduced in this context. Our method consists in first of all performing an $S$-expansion procedure involving an infinite abelian semigroup (which we will denote by $S^{(\infty)}$), and in consequently subtracting an (infinite) ideal subalgebra to the infinitely $S$-expanded one. 
We explicitly show that the subtraction of the infinite ideal subalgebra corresponds to a reduction. 
We also explain how to write the invariant tensors of the target (super)algebras in terms of those of the original ones in this context.
In Section \ref{Examples}, we give some example of application on different (super)algebras, reproducing some results presented in the literature in the context of finite expansions and contractions, and we write the invariant tensors of some of the mentioned (super)algebras in terms of the invariant tensors of the starting ones. Section \ref{Comments} contains a summary of our results, with comments and possible developments.

\section{Review of $S$-expansion, In\"on\"u-Wigner contraction, and ideal subalgebras} \label{Review}

The $S$-expansion of Lie (super)algebras through abelian semigroups consists in considering the direct product between an abelian semigroup $S$ and a Lie algebra $\mathfrak{g}$: $S \times \mathfrak{g}$.

Under general conditions, relevant subalgebras can be systematically extracted from $S\times\mathfrak{g}$, like resonant algebras and reduced ones. 

In the following, we give a review of the $S$-expansion, reduction, and resonance procedures \cite{Izaurieta} and of the In\"on\"u-Wigner contraction process \cite{IW1, Inonu}. 

We also extend the concept of normal subgroup of a group to the concept of ideal subalgebra of an algebra, which will be useful in the development of the (infinite) $S$-expansion prescription described in this paper.

\subsection{$S$-expansion for an arbitrary semigroup $S$}

The $S$-expansion procedure \cite{Izaurieta} consists in combining the structure constants of a Lie algebra $\mathfrak{g}$ with the inner multiplication law of a semigroup $S$, in order to define the Lie bracket of a new, $S$-expanded algebra $\mathfrak{g}_S= S \times \mathfrak{g}$. 

\begin{definition}\label{defSexp} 
Let $S= \lbrace \lambda_\alpha \rbrace$, with $\alpha=1,...,N$, be a finite, abelian semigroup with two-selector $K_{\alpha \beta}^{\;\;\;\; \gamma}$ defined by
\begin{equation}\label{kseldef}
K_{\alpha \beta}^{\;\;\;\; \gamma} = \left\{ \begin{aligned} &
1 , \;\;\;\;\; \text{when} \; \lambda_\alpha \lambda_\beta = \lambda_\gamma,
\\ & 0 , \;\;\;\;\; \text{otherwise}. \end{aligned} 
\right.
\end{equation}
Let $\mathfrak{g}$ be a Lie algebra with basis $\lbrace T_A \rbrace$ and structure constants $C_{AB}^{\;\;\;\;C}$, defined by the commutation relations 
\begin{equation}
\left[T_A, T_B \right]= C_{AB}^{\;\;\;\;C}\; T_C .
\end{equation}
Denote a basis element of the direct product $S\times \mathfrak{g}$ by $T_{(A,\alpha)}= \lambda_\alpha T_A$, and consider the induced commutator
\begin{equation}
\left[ T_{(A,\alpha)},T_{(B,\beta)}\right] \equiv \lambda_\alpha \lambda_\beta \left[T_A,T_B \right].
\end{equation}
One can show \cite{Izaurieta} that the product
\begin{equation}\label{prodsexp}
\mathfrak{g}_S= S \times \mathfrak{g}
\end{equation}
corresponds to the Lie algebra given by
\begin{equation}\label{expandedone}
\left[T_{(A,\alpha)},T_{(B,\beta)}\right]= K_{\alpha \beta}^{\;\;\;\; \gamma} C_{AB}^{\;\;\;\;C}\; T_{(C,\gamma)},
\end{equation}
whose structure constants can be written as
\begin{equation}\label{strconstant}
C_{(A,\alpha)(B,\beta)}^{\;\;\;\;\;\;\;\;\;\;\;\;\;\;\;\;(C,\gamma)}= K_{\alpha \beta}^{\;\;\;\gamma}C_{AB}^{\;\;\;\;C}.
\end{equation}
For every abelian semigroup $S$ and Lie algebra $\mathfrak{g}$, the algebra $\mathfrak{g}_S$ obtained through the product (\ref{prodsexp}) is also a Lie algebra, with a Lie bracket given by (\ref{expandedone}).
The new, larger Lie algebra thus obtained is called \textit{$S$-expanded algebra}, and it is written as $\mathfrak{g}_S= S \times \mathfrak{g}$. 
\end{definition}

\subsection{Reduced algebras}

Let us give the following definition (see Ref. \cite{Izaurieta}) in order to introduce the concept of reduction of Lie algebras:

\begin{definition}\label{defreduced}
Let us consider a Lie algebra $\mathfrak{g}$ of the form $\mathfrak{g}=V_0 \oplus V_1$, where $V_0$ and $V_1$ are two subspaces respectively given by  $V_0=\lbrace T_{a_0} \rbrace$ and $V_1=\lbrace T_{a_1} \rbrace$. When $\left[V_0,V_1 \right]\subset V_1$, that is to say, when the commutation relations between generators present the following form
\begin{eqnarray}
\left[T_{a_0},T_{b_0}\right]&=& C_{a_0 b_0}^{\;\;\;\;\;c_0} T_{c_0} + C_{a_0b_0}^{\;\;\;\;\;c_1}T_{c_1}, \label{commreduced1}\\ 
\left[T_{a_0},T_{b_1}\right]&=& C_{a_0 b_1}^{\;\;\;\;\;c_1} T_{c_1}, \label{commreduced2}\\ 
\left[T_{a_1},T_{b_1}\right]&=& C_{a_1b_1}^{\;\;\;\;\;c_0}T_{c_0}+C_{a_1b_1}^{\;\;\;\;\;c_1}T_{c_1}, \label{commreduced3}
\end{eqnarray}
one can show that the structure constants $C_{a_0b_0}^{\;\;\;\;\;c_0}$ satisfy the Jacobi identity themselves, and therefore 
\begin{equation}
\left[T_{a_0},T_{b_0}\right]= C_{a_0b_0}^{\;\;\;\;\;c_0}T_{c_0} 
\end{equation}
itself corresponds to a Lie algebra, which is called \textit{reduced algebra} of $\mathfrak{g}$. 
\end{definition}
Let us observe that, in general, a reduced algebra does \textit{not} correspond to a subalgebra.

\subsection{$0_S$-reduction of $S$-expanded algebras}

The $0_S$-reduction of Lie algebras \cite{Izaurieta} involves the extraction of a smaller algebra from an $S$-expanded Lie algebra $\mathfrak{g}_S$, when certain conditions are met. 

In order to give a brief review of $0_S$-reduction, let us consider an abelian semigroup $S$ and the $S$-expanded (super)algebra $\mathfrak{g}_S= S \times \mathfrak{g}$. 

When the semigroup $S$ has a \textit{zero element} $0_S \in S$ (also denoted with the symbol $\lambda_{0_S} \equiv 0_S$ in the literature), this element plays a peculiar role in the $S$-expanded (super)algebra.
In fact, we can split the semigroup $S$ into non-zero elements $\lambda_{i}$, $i=0,...,N$, and a zero element $\lambda_{N+1}=0_S= \lambda_{0_S}$. 
The zero element $\lambda_{0_S}$ is defined as one for which
\begin{equation}
\lambda_{0_S} \lambda_\alpha = \lambda_\alpha \lambda_{0_S} = \lambda_{0_S},
\end{equation}
for each $\lambda_\alpha \in S$.
Under this assumption, we can write $S = \lbrace \lambda_{i}\rbrace \cup \lbrace\lambda_{N+1}=\lambda_{0_S}\rbrace$, with $i = 1, ... ,N$ (the Latin index run only on the non-zero elements of the semigroup). 
Then, the two-selector satisfies the relations
\begin{eqnarray}
K_{i,N+1}^{\;\;\;\;\;\;\;\;\;\; j} &=& K_{N+1,i}^{\;\;\;\;\;\;\;\;\;\; j}=0, \\
K_{i,N+1}^{\;\;\;\;\;\;\;\;\;\; N+1} &=& K_{N+1,i}^{\;\;\;\;\;\;\;\;\;\; N+1}=1, \\
K_{N+1,N+1}^{\;\;\;\;\;\;\;\;\;\;\;\;\;\;\;\;j} &=& 0, \\
K_{N+1,N+1}^{\;\;\;\;\;\;\;\;\;\;\;\;\;\;\;\;N+1} &=& 1,
\end{eqnarray}
which mean, from the multiplication rules point of view,
\begin{eqnarray}
\lambda_{N+1}\lambda_i &=& \lambda_{N+1}, \\
\lambda_{N+1}\lambda_{N+1} &=& \lambda_{N+1}.
\end{eqnarray}
Therefore, for $\mathfrak{g}_S=S \times \mathfrak{g}$ we can write the commutation relations
\begin{eqnarray}
\left[T_{(A,i)},T_{(B,j)}\right]&=&K_{ij}^{\;\;\;k}C_{AB}^{\;\;\;\;C}T_{(C,k)} + K_{ij}^{\;\;\; N+1}C_{AB}^{\;\;\;\;C}T_{(C,N+1)}, \\
\left[T_{(A,N+1)},T_{(B,j)}\right]&=&C_{AB}^{\;\;\;\;C}T_{(C,N+1)}, \\
\left[T_{(A,N+1)},T_{(B,N+1)}\right]&=&C_{AB}^{\;\;\;\;C}T_{(C,N+1)}.
\end{eqnarray}
If we now compare these commutation relations with (\ref{commreduced1}), (\ref{commreduced2}), and (\ref{commreduced3}), we clearly see that
\begin{equation}\label{eqredalg}
\left[T_{(A,i)},T_{(B,j)}\right]= K_{ij}^{\;\;\;k}C_{AB}^{\;\;\;\;C}T_{(C,k)}
\end{equation}
are the commutation relations of a reduced Lie algebra generated by $\lbrace T_{(A,i)} \rbrace$, whose structure constants are given by $K_{ij}^{\;\;\;\;k}C_{AB}^{\;\;\;\;C}$. 

The reduction procedure, in this particular case, is tantamount to impose the condition
\begin{equation}
T_{A,N+1}=\lambda_{0_S} T_A \equiv {0_S} T_A = 0.
\end{equation}
We can notice that, in this case, the reduction abelianizes large sectors of the (super)algebra, and that for each $j$ satisfying $K_{ij}^{\;\;\;N+1}=1$ (that is to say, $\lambda_i \lambda_j= \lambda_{N+1}$), we have 
\begin{equation}
\left[ T_{(A,i)},T_{(B,j)}\right]=0.
\end{equation}
The above considerations led the authors of Ref. \cite{Izaurieta} to the following definition:

\begin{definition}\label{def0Sreduced}
Let $S$ be an abelian semigroup with a zero element $\lambda_{0_S}\equiv 0_S \in S$, and let $\mathfrak{g}_S = S \times \mathfrak{g}$ be an $S$-expanded algebra. Then, the algebra obtained by imposing the condition
\begin{equation}\label{zero}
\lambda_{0_S} T_A \equiv 0_S T_A =0
\end{equation}
on $\mathfrak{g}_S$ (or on a subalgebra of it) is called \textit{$0_S$-reduced algebra} of $\mathfrak{g}_S$ (or of the subalgebra).
\end{definition}

Furthermore, when a $0_S$-reduced algebra presents a structure which is resonant with respect to the structure of the semigroup involved in the $S$-expansion process, the procedure takes the name of \textit{$0_S$-resonant-reduction}.

\subsection{Resonant subalgebras}

Another prescription for getting smaller algebras (subalgebras) from the expanded ones, which read $S \times \mathfrak{g}$, is described in the definition below (see Ref. \cite{Izaurieta}).

\begin{definition}\label{defresonant}
Let $\mathfrak{g}=\bigoplus_{p\in I}V_p$ be a decomposition of $\mathfrak{g}$ into subspaces $V_p$, where $I$ is a set of indices. For each $p, q \in I$ it is always possible to define the subsets $i_{(p,q)} \subset I$, such that 
\begin{equation}\label{decomposition}
\left[V_p,V_q\right]\subset \bigoplus_{r\in i_{(p,q)}} V_r,
\end{equation}
where the subsets $i_{(p,q)}$ store the information on the subspace structure of $\mathfrak{g}$.

Now, let $S=\bigcup_{p\in I} S_p$ be a subset decomposition of the abelian semigroup $S$, such that
\begin{equation}\label{groupdecomposition}
S_p\cdot S_q \subset \bigcap_{r \in i_{(p,q)}} S_r,
\end{equation}
where the product $S_p \cdot S_q$ is defined as
\begin{equation}\label{prod}
S_p \cdot S_q = \lbrace \lambda_\gamma \mid \lambda_\gamma= \lambda_{\alpha_p}\lambda_{\alpha_q}, \; \text{with} \; \lambda_{\alpha_p}\in S_p, \lambda_{\alpha_q}\in S_q \rbrace \subset S.
\end{equation} 
When such subset decomposition $S =\bigcup_{p\in I} S_p$ exists, this decomposition is said to be \textit{in resonance} with the subspace decomposition of $\mathfrak{g}$, $\mathfrak{g}= \bigoplus_{p\in I}V_p$. 
\end{definition}

The resonant subset decomposition is essential in order to systematically extract subalgebras from the $S$-expanded algebra $\mathfrak{g}_S = S \times \mathfrak{g}$, as it was enunciated and proven in Ref. \cite{Izaurieta} with the following theorem:
\begin{theorem}\label{Tres} 
Let $\mathfrak{g} = \bigcup_{p\in I} V_p$ be a subspace decomposition of $\mathfrak{g}$, with a structure described by equation (\ref{decomposition}), and let $S=\bigcup_{p\in I} S_p$ be a resonant subset decomposition of the abelian semigroup $S$, with the structure given in equation (\ref{groupdecomposition}). Define the subspaces of $\mathfrak{g}_S = S \times \mathfrak{g}$ as
\begin{equation}
W_p=S_p \times V_p, \;\;\; p\in I.
\end{equation}
Then, 
\begin{equation}
\mathfrak{g}_R=\bigoplus_{p\in I}W_p
\end{equation}
is a subalgebra of $\mathfrak{g}_S=S \times \mathfrak{g}$, called resonant subalgebra of $\mathfrak{g}_S$.
\end{theorem}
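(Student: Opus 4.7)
The plan is to verify closure of the subspace $\mathfrak{g}_R = \bigoplus_{p\in I} W_p$ under the Lie bracket of $\mathfrak{g}_S$; since $\mathfrak{g}_R$ is, by construction, a vector subspace of $\mathfrak{g}_S$ and inherits bilinearity, antisymmetry and the Jacobi identity for free, closure is the only substantive thing to check.

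The first step is to pick generic generators $T_{(A,\alpha_p)} \in W_p$ and $T_{(B,\beta_q)} \in W_q$, meaning $T_A \in V_p$, $T_B \in V_q$, $\lambda_{\alpha_p}\in S_p$, $\lambda_{\beta_q}\in S_q$, and apply the induced commutator together with the defining relation (\ref{expandedone}):
\begin{equation*}
\bigl[T_{(A,\alpha_p)},T_{(B,\beta_q)}\bigr] = \lambda_{\alpha_p}\lambda_{\beta_q}\,[T_A,T_B].
\end{equation*}
Then I would invoke the subspace decomposition condition (\ref{decomposition}) to expand the right-hand bracket as $[T_A,T_B]=\sum_{r\in i_{(p,q)}} T_{C_r}$ with $T_{C_r}\in V_r$, so that
\begin{equation*}
\bigl[T_{(A,\alpha_p)},T_{(B,\beta_q)}\bigr]=\sum_{r\in i_{(p,q)}} \lambda_{\alpha_p}\lambda_{\beta_q}\, T_{C_r}.
\end{equation*}

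The decisive step is to apply the resonance condition (\ref{groupdecomposition}), which tells us $\lambda_{\alpha_p}\lambda_{\beta_q}\in \bigcap_{r\in i_{(p,q)}} S_r$. This is precisely why the decomposition on the semigroup side is required to be an intersection and not a union: the single product element $\lambda_{\alpha_p}\lambda_{\beta_q}$ must lie simultaneously in every $S_r$ appearing in the expansion of $[V_p,V_q]$, so that each summand $\lambda_{\alpha_p}\lambda_{\beta_q}\,T_{C_r}$ belongs to $S_r\times V_r = W_r$. Consequently the whole commutator lies in $\bigoplus_{r\in i_{(p,q)}} W_r \subset \mathfrak{g}_R$, establishing closure.

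I expect no genuine obstacle; the only point that requires care is the logical match between the intersection in (\ref{groupdecomposition}) and the direct sum in (\ref{decomposition}), which is exactly what makes the resonance condition \emph{sufficient} (though in general not necessary) for $\mathfrak{g}_R$ to be a subalgebra. After closure is proven, I would simply remark that the structure constants of $\mathfrak{g}_R$ are inherited from (\ref{strconstant}) restricted to indices $(A,\alpha_p)$ with $T_A\in V_p$ and $\lambda_{\alpha_p}\in S_p$, confirming that $\mathfrak{g}_R$ is itself a Lie algebra.
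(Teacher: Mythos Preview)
Your proof is correct and is precisely the standard closure argument for resonant subalgebras. Note that the present paper does not actually supply its own proof of this theorem; it only quotes the statement and refers the reader to Ref.~\cite{Izaurieta} (Izaurieta, Rodr\'iguez, Salgado), where the original proof appears. The argument you give --- computing $[T_{(A,\alpha_p)},T_{(B,\beta_q)}]=\lambda_{\alpha_p}\lambda_{\beta_q}[T_A,T_B]$, decomposing $[T_A,T_B]$ along $\bigoplus_{r\in i_{(p,q)}}V_r$ via (\ref{decomposition}), and then using the resonance condition (\ref{groupdecomposition}) to place $\lambda_{\alpha_p}\lambda_{\beta_q}$ in every relevant $S_r$ simultaneously --- is exactly the proof in that reference, including your observation that the \emph{intersection} on the semigroup side is what matches the \emph{direct sum} on the algebra side.
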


\subsection{In\"on\"u-Wigner contraction}

In what follows, we review the In\"on\"u-Wigner contraction (and, in particular, an instance of its definition, which does fit the definition of standard In\"on\"u-Wigner contraction) and the so-called ``generalized In\"on\"u-Wigner contraction" procedures (see Refs. \cite{IW1, Inonu, Azcarraga13, Gilmorebook, WW1,WW2,Lukierski} for further details). 

\begin{definition}\label{defstandgenIW}
Let $T_i$, $i=1,2,\ldots,n$, be a set of basis vectors for a Lie algebra $\mathfrak{g}$. Let a new set of basis vectors $\tilde{T}_i$, $i=1,2,\ldots,n$, be related to the $T_i$'s by 
\begin{equation}
\tilde{T}_j=U(\varepsilon)_j^{\ i}T_i, \ \ \ U(\varepsilon=1)_j^{\ i}=\delta_j^{\ i} ,\ \ \ det\left[U(\varepsilon=0\right)_j^{\ i}]=0.
\end{equation}
The structure constants of the Lie algebra $\mathfrak{g}$ with respect to the new basis are given by 
\begin{equation}
\left[\tilde{T}_i,\tilde{T}_j\right]=C_{ij}^{\ \ k}(\varepsilon)\tilde{T}_k.
\end{equation}
When the limit 
\begin{equation}
\lim_{\varepsilon\longrightarrow0}C_{ij}^{\ \ k}(\varepsilon)={C'}_{ij}^{\ \ k}
\end{equation}
exists and is well defined, the new structure constants ${C'}_{ij}^{\ \ k}$ characterize a Lie algebra that is not isomorphic to the original one.
This procedure is called \textit{In\"on\"u-Wigner contraction}.
\end{definition}

The above definition fits the standard In\"on\"u-Wigner contraction, which corresponds to a specific choice of the matrix $U(\varepsilon)$, as well as the generalized In\"on\"u-Wigner contraction, which correspond to a special case of Definition \ref{defstandgenIW}, in which the matrix $U(\varepsilon)$ act in a peculiar way, as we will see in the following.

We now discuss a particular instance of Definition \ref{defstandgenIW} which does fit the case of \textit{standard In\"on\"u-Wigner contraction}:
If we now consider the symmetric coset $\mathfrak{g}=\mathfrak{h}\oplus\mathfrak{p}$ of simple Lie algebras, where $\mathfrak{h}$ closes a subalgebra and $\mathfrak{p}$ is a complementary subspace, with commutation relations of the form
\begin{align}
 \left[\mathfrak{h},\mathfrak{h}\right]& \subset\mathfrak{h},\\
 \left[\mathfrak{h},\mathfrak{p}\right]&\subset\mathfrak{p},\\
  \left[\mathfrak{p},\mathfrak{p}\right]&\subset\mathfrak{h},
\end{align}
the standard In\"on\"u-Wigner contraction of $\mathfrak{g}\longrightarrow\mathfrak{g}'$ involves the matrix
\begin{equation}
U(\varepsilon)=
\begin{pmatrix}
I_{dim(\mathfrak{h})} & 0\\
  0 & \varepsilon I_{dim(\mathfrak{p})}
\end{pmatrix} ,
\end{equation}
where $I$ is the identity matrix, and $dim(\mathfrak{h})$ and $dim(\mathfrak{p})$ stand for the dimension of $\mathfrak{h}$ and $\mathfrak{p}$, respectively. We can thus write
\begin{align}
 \left(\begin{array}{c}
  \mathfrak{h}'\\
  \mathfrak{p}'
 \end{array}\right)
=\left(\begin{array}{cc}
  I_{dim(\mathfrak{h})} & 0\\
  0 & \varepsilon I_{dim(\mathfrak{p})}
 \end{array}\right)\left(\begin{array}{c}
             \mathfrak{h}\\
             \mathfrak{p}
            \end{array}\right).
\end{align}
Therefore, after having performed the limit $\varepsilon \longrightarrow 0$, the contracted algebra $\mathfrak{g}'$ becomes a semidirect product $\mathfrak{g}'=\mathfrak{h}'\ltimes\mathfrak{p}'=\mathfrak{h}\ltimes\mathfrak{p}'$,
and we can write the following commutation relations:
\begin{align}
 \left[\mathfrak{h}',\mathfrak{h}'\right]&\subset\mathfrak{h}',\\
 \left[\mathfrak{h}',\mathfrak{p}'\right]&\subset\mathfrak{p}',\\
  \left[\mathfrak{p}',\mathfrak{p}'\right]&=0,
\end{align}
from which we can see that $\mathfrak{p}'$ is now an abelian sector.

If we consider non-symmetric cosets (namely, in the case in which $\left[\mathfrak{p},\mathfrak{p}\right]\subset\mathfrak{h}\oplus \mathfrak{p}$), we can prove that the standard In\"{o}n\"{u}-Wigner contraction still abelianizes this commutator ($\left[\mathfrak{p}',\mathfrak{p}'\right]=0$).

Let us observe that the contracted algebra has the same dimension of the starting one.

The In\"{o}n\"{u}-Wigner contraction has a lot of applications in Physics, among which the relevant cases of the Galilei algebra as an In\"{o}n\"{u}-Wigner contraction of the Poincar\'e algebra, and the Poincar\'e algebra as a contraction of the de Sitter algebra. Both of them involve two universal constants: the velocity of light and the cosmological constant, respectively.

\subsubsection{Generalized In\"{o}n\"{u}-Wigner contraction}

Any diagonal contraction is equivalent to a generalized In\"{o}n\"{u}-Wigner contraction (which can also be referred to as ``Weimar-Woods generalized contraction") \cite{WW1,WW2,Lukierski} with integer parameter powers.

\begin{definition}\label{defgeneralizedIW}
Let $\mathfrak{g}$ be a non-simple algebra which can be written as a sum of $n+1$ subspaces (sets of generators) $V_i$, $i=0,1,\ldots,n $, 
\begin{equation}\label{sumspaces}
 \mathfrak{g}=\bigoplus _{i=0}^{n} V_i = V_0\oplus V_1\oplus\cdots\oplus V_n ,
\end{equation}
such that the following Weimar-Woods conditions \cite{WW1,WW2} are satisfied:
\begin{equation}\label{conditionsWW}
\left[V_p,V_q\right]\subset\bigoplus_{s\leq p+q}V_s,\ \ p,q=0,1,\ldots,n .
\end{equation}
The conditions in (\ref{conditionsWW}) imply that $V_0$ is a subalgebra of $\mathfrak{g}$. The contraction of $\mathfrak{g}$ can be obtained after having performed a proper rescaling on the generators of each subspace, and once a singular limit for the contraction parameter have been considered, namely by considering
\begin{equation}\label{matrUspec}
{V'}_i=\varepsilon^{a_i}V_i, \;\;\; i=0,1,\ldots,n ,
\end{equation}
with the choice of the powers $a$'s providing a finite limit of the contracted algebra if $\varepsilon\longrightarrow0$. This procedure is called \textit{generalized In\"on\"u-Wigner contraction}.
\end{definition}

Let us observe that Definition \ref{defgeneralizedIW} corresponds to a special case of Definition \ref{defstandgenIW}, in which the matrix $U(\varepsilon)$ acts as in (\ref{matrUspec}).
In fact, the generalized In\"{o}n\"{u}-Wigner contractions are produced by diagonal matrices of the form $U(\varepsilon)_{jj}= \varepsilon^{n_j},\ n_j\in\mathbb{Z}$.
Thus, a contraction $\mathfrak{g}\xrightarrow{U(\varepsilon)_{ij}}\mathfrak{g}'$ is called \textit{generalized In\"on\"u-Wigner contraction} \cite{WW2} if the matrix $U(\varepsilon)$ has the form
\begin{equation}
 U(\varepsilon)_{ij}=\delta_{ij}\varepsilon^{n_j},\ \ n_j\in \mathbb{R},\ \ \varepsilon>0,\ \ i,j=1,2,\ldots ,N,
\end{equation}
with respect to a basis of generators $\lbrace{T_{a_1}, T_{a_2},\ldots,T_{a_N}\rbrace}$.

\subsection{Ideal subalgebra} \label{ideal1}

In the following, we extend the concept of normal subgroup of a group to the concept of ideal subalgebra of an algebra (see Ref. \cite{Jacobson} for further details). 

\begin{definition}\label{defideal}
Let us consider a Lie group $G$ and a normal subgroup $H$ of $G$. Let the Lie algebra $\mathcal{A}$ be the algebra associated with the Lie group $G$ (by exponentiation), and let the subalgebra $\mathcal{I}$ of $\mathcal{A}$ be the subalgebra associated with the normal subgroup $H$. Then, $\mathcal{I}$ is an \textit{ideal} (\textit{ideal subalgebra}) of $\mathcal{A}$, and we can write
\begin{equation}
\left[\mathcal{A},\mathcal{I}\right]\subset\mathcal{I}.
\end{equation}
\end{definition}

Let us now consider the homomorphism $\varphi: G\longrightarrow G/H$, where $G$ is a Lie group and $H$ a normal subgroup of $G$. Let $\mathcal{A}$ be the Lie algebra associated with the Lie group $G$. \\ By definition, $H=ker \varphi$, and $G/H$ is a Lie group. Let $\hat{\mathcal{A}}$ be the Lie algebra associated with the Lie group $G/H$. Then, $\varphi$ induces a homomorphism $\hat{\varphi}: \mathcal{A}\longrightarrow\hat{\mathcal{A}}$ between algebras, such that, if $a\in\mathcal{A}$, then $\varphi(e^a)=e^{\hat{\varphi}(a)}$. Since, from Definition \ref{defideal}, $\forall a \in \mathcal{I}$ we have $e^a\in H$, this now implies $\varphi(e^a)=e' \in G/H \ \Longleftrightarrow \ \hat{\varphi}(a)=0$.
Consequently, $\hat{\varphi}(\mathcal{I})=\left\{0\right\}$. 

The algebra $\hat{\mathcal{A}}$ is isomorphic to the coset space $\mathcal{A}\ominus\mathcal{I}$, whose elements are the equivalence classes $\left[a\right]=\left\{a'\in\mathcal{A}:a'-a\in\mathcal{I}\right\}$.

Let us now consider $\rho$: $G \longrightarrow Aut(V)$, where $V$ is a vector space. This defines a representation of $G/H$ provided on the vector space $V$. It has a trivial action on each vector $\Psi \in V$:
\begin{equation}
 \rho(H) \Psi=\Psi . 
\end{equation}
Then, by defining, $\forall \left[g\right]\in G/H$,
\begin{equation}
 \rho(\left[g\right])=\rho(g),
\end{equation}
since
\begin{align}
 \rho(g\cdot h)\Psi=&\rho(g)\rho(h)\Psi\\
 =&\rho(g)\Psi,\ \ \forall \Psi \in V,
\end{align}
we can say that $\rho(G/H)=\rho(G)$
is the property we need in order to require 
\begin{equation}
\rho(\mathcal{I})=0 .
\end{equation}
In this way, $\rho$ provides a representation of $\mathcal{A}\ominus\mathcal{I}$ such that:
\begin{equation}
\rho(\mathcal{A})=\rho(\mathcal{A}\ominus\mathcal{I}).
\end{equation}
Thus, if we now write
\begin{equation}
\mathcal{A}\ominus\mathcal{I} = \mathcal{A}_0,
\end{equation}
where $\mathcal{A}_0$ is a coset space, we can say that $\rho(\mathcal{A}_0)$ is homomorphic to $\rho(\mathcal{A})$, and we can finally write
\begin{equation}
 \rho(\mathcal{A})\Psi=\rho(\mathcal{A}_0)\Psi,\ \ \forall\Psi \in V.
\end{equation}

\section{In\"on\"u-Wigner contraction and $S$-expansion} \label{Generalized}

In the following, we show that a standard In\"on\"u-Wigner contraction can be reproduced with a (finite) $S$-expansion procedure (with resonance and $0_S$-reduction). Then, we review the way in which the authors of \cite{Izaurieta} used the information on the subspace structure of the original Lie algebra to extract reduced algebras from resonant subalgebras of $S$-expanded algebras, being able to reproduce, in this way, the generalized In\"on\"u-Wigner contraction.

After doing this, we proceed to the core of the present paper, that is we describe our prescription for infinite $S$-expansion (involving an infinite abelian semigroup $S^{(\infty)}$) with ideal subtraction, and we explain how to reproduce a generalized In\"on\"u-Wigner contraction in this context. This method is a new prescription for $S$-expansion and also an alternative way of seeing the (generalized) In\"on\"u-Wigner contraction.

\subsection{Finite $S$-expansion and In\"{o}n\"{u}-Wigner contraction}

We first of all show the way in which a standard In\"on\"u-Wigner contraction can be reproduced with a (finite) $S$-expansion procedure. Then, we review, following \cite{Izaurieta}, the procedure to extract reduced algebras from resonant subalgebras of $S$-expanded algebras, allowing the $S$-expansion method to reproduce the case of generalized In\"on\"u-Wigner contraction.

\subsubsection{Standard In\"{o}n\"{u}-Wigner contraction as finite $S$-expansion with resonance and $0_S$-reduction}

Let us consider a coset $\mathfrak{g}$ with a subalgebra $\mathfrak{h}$ and a complementary subspace $\mathfrak{p}$, namely $\mathfrak{g}=\mathfrak{h}\oplus \mathfrak{p}$, with commutation relations of the form
\begin{align}
 \left[\mathfrak{h},\mathfrak{h}\right]\subset&\mathfrak{h},\nonumber\\
 \left[\mathfrak{h},\mathfrak{p}\right]\subset&\mathfrak{p},\label{Symmetric}\\
 \left[\mathfrak{p},\mathfrak{p}\right]\subset&\mathfrak{h}\oplus\mathfrak{p}\nonumber.
\end{align}
The In\"on\"u-Wigner contraction of $\mathfrak{g} \longrightarrow \mathfrak{g}'$ involves the following transformation:
\begin{equation}
 \begin{pmatrix}
  \mathfrak{h}'\\
  \mathfrak{p}'
 \end{pmatrix}=\begin{pmatrix}
  I_{dim(\mathfrak{h})} & 0\\
 0 & \varepsilon I_{dim(\mathfrak{p})}\\
\end{pmatrix}\begin{pmatrix}
\mathfrak{h}\\
\mathfrak{p}
\end{pmatrix},
\end{equation}
where $I$ is the identity matrix, and $dim(\mathfrak{h})$ and $dim(\mathfrak{p})$ stand for the dimension of $\mathfrak{h}$ and $\mathfrak{p}$, respectively; $\varepsilon$ is the contraction parameter. Then, the commutation relations of $\mathfrak{g}'=\mathfrak{h}'\ltimes \mathfrak{p}'=\mathfrak{h}\ltimes \mathfrak{p}'$ are well defined for all values of $\varepsilon$, including the singular limit $\varepsilon \longrightarrow0$:
\begin{align}
  \left[\mathfrak{h},\mathfrak{h}\right]\subset&\mathfrak{h}, & & &\left[\mathfrak{h}',\mathfrak{h}'\right]\subset\mathfrak{h}',\nonumber\\
 \left[\mathfrak{h},\mathfrak{p}\right]\subset&\mathfrak{p},&\xrightarrow{\mathfrak{h}'=\mathfrak{h},\;\mathfrak{p}'=\varepsilon \mathfrak{p},\; \varepsilon \longrightarrow0} & &\left[\mathfrak{h}',\mathfrak{p}'\right]\subset\mathfrak{p}',\label{IWstandard}\\
 \left[\mathfrak{p},\mathfrak{p}\right]\subset&\mathfrak{h}\oplus\mathfrak{p}, &  & & \left[\mathfrak{p}',\mathfrak{p}'\right]\ =0.\nonumber
\end{align}

This In\"on\"u-Wigner contraction (standard In\"on\"u-Wigner contraction) can be reproduced with a finite $S$-expansion procedure involving $0_S$-resonant-reduction and the semigroup $S_E^{(1)}=\left\{\lambda_0,\lambda_1,\lambda_2\right\}$ (where the zero element of the semigroup is $\lambda_{0_S}=\lambda_2$), described by the multiplication table
\begin{equation}
 \begin{array}{c|ccc}
  & \lambda_0 & \lambda_1 & \lambda_2\\
 \hline
 \lambda_0 & \lambda_0 & \lambda_1&\lambda_2\\
 \lambda_1 & \lambda_1 & \lambda_2 &\lambda_2\\
 \lambda_2 & \lambda_2 & \lambda_2 &\lambda_2
 \end{array}
\end{equation}
This can be done (see Ref. \cite{Izaurieta}) by considering a proper partition over the subspaces of the starting Lie algebra and by multiplying its generators by the elements of the semigroup $S_E^{(1)}$.
Then, the resonant-reduced, S-expanded Lie algebra $\mathfrak{g}_{S_{RR}}=\mathfrak{h}'\oplus \mathfrak{p}'$, where $\mathfrak{h}' = \lambda_0 \mathfrak{h}$ and $\mathfrak{p}'= \lambda_1 \mathfrak{p}$, satisfies the following commutation relations:
\begin{align}
  \left[\lambda_0\mathfrak{h},\lambda_0\mathfrak{h}\right]\subset&\lambda_0\mathfrak{h}, \\
 \left[\lambda_0\mathfrak{h},\lambda_1\mathfrak{p}\right]\subset&\lambda_1\mathfrak{p},\\
 \left[\lambda_1\mathfrak{p},\lambda_1\mathfrak{p}\right]=&0.
\end{align}
Here we can see that the role of the zero element $\lambda_2=\lambda_{0_S}$ is to turn to zero each multiplicand (see equation (\ref{zero}) in Section \ref{Review}).
We can easily see that by rewriting $\lambda_0\mathfrak{h}$ and $\lambda_1 \mathfrak{p}$ in terms of $\mathfrak{h}'$ and $\mathfrak{p}'$, we arrive to equation (\ref{IWstandard}). 
Thus, we can conclude that the In\"{o}n\"{u}-Wigner contraction can be seen as an $S$-expansion (involving $0_S$-resonant-reduction) performed with the semigroup $S_E^{(1)}$.

\subsubsection{Reduction of resonant subalgebras and generalized In\"{o}n\"{u}-Wigner contraction}

The authors of \cite{Izaurieta} proved that one can extract reduced algebras from the resonant subalgebra of a (finite) $S$-expanded algebra $S \times \mathfrak{g}$, and they showed that, by following this path, the generalized In\"{o}n\"{u}-Wigner contraction fits within their scheme.

In particular, Theorem VII.1 of their work provides necessary conditions under which a reduced algebra can be extracted from a resonant subalgebra. It reads as follow:

\begin{theorem}\label{teoiza}
Let $\mathfrak{g}_R = \bigoplus_{p \in I} S_p \times V_p$ be a resonant subalgebra of $\mathfrak{g}_S=S \times \mathfrak{g}$. Let $S_p = \hat{S}_p \cup \check{S}_p$ be a partition of the subset $S_p \subset S$ such that
\begin{equation}\label{intzero}
\hat{S}_p \cap \check{S}_p = \emptyset ,
\end{equation}
\begin{equation}\label{ressemigroup}
\check{S}_p \cdot \hat{S}_q \subset \bigcap_{r \in i_{(p,q)}}\hat{S}_r.
\end{equation}
Conditions (\ref{intzero}) and (\ref{ressemigroup}) induce the decomposition $\mathfrak{g}_R = \check{\mathfrak{g}}_R \oplus \hat{\mathfrak{g}}_R$ on the resonant subalgebra, where
\begin{equation}
\check{\mathfrak{g}}_R = \bigoplus_{p \in I} \check{S}_p \times V_p ,
\end{equation}
\begin{equation}
\hat{\mathfrak{g}}_R = \bigoplus_{p \in I}\hat{S}_p \times V_p .
\end{equation}
When the conditions (\ref{intzero}) and (\ref{ressemigroup}) hold, then
\begin{equation}
\left[ \check{\mathfrak{g}}_R, \hat{\mathfrak{g}}_R \right] \subset \hat{\mathfrak{g}}_R,
\end{equation}
ant therefore $\vert \check{\mathfrak{g}}_R \vert$ corresponds to a reduced algebra of $\mathfrak{g}_R$.
\end{theorem}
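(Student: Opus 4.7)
The plan is to verify that the pair $(\check{\mathfrak{g}}_R, \hat{\mathfrak{g}}_R)$ satisfies the hypotheses of Definition \ref{defreduced} with $V_0 \equiv \check{\mathfrak{g}}_R$ and $V_1 \equiv \hat{\mathfrak{g}}_R$; the conclusion that $\vert \check{\mathfrak{g}}_R \vert$ carries the structure of a reduced algebra will then follow at once. First I would observe that the disjointness condition (\ref{intzero}), combined with $\hat{S}_p \cup \check{S}_p = S_p$, yields a direct-sum decomposition at the level of generator sets,
\begin{equation*}
\mathfrak{g}_R = \bigoplus_{p \in I} S_p \times V_p = \left( \bigoplus_{p \in I} \check{S}_p \times V_p \right) \oplus \left( \bigoplus_{p \in I} \hat{S}_p \times V_p \right) = \check{\mathfrak{g}}_R \oplus \hat{\mathfrak{g}}_R,
\end{equation*}
so that the splitting required by Definition \ref{defreduced} is in place.

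The main step is to prove the bracket inclusion $\left[\check{\mathfrak{g}}_R, \hat{\mathfrak{g}}_R\right] \subset \hat{\mathfrak{g}}_R$. I would pick arbitrary basis generators $T_{(A,\alpha)}$ with $\lambda_\alpha \in \check{S}_p$ and $T_A \in V_p$, and $T_{(B,\beta)}$ with $\lambda_\beta \in \hat{S}_q$ and $T_B \in V_q$, and apply the $S$-expanded bracket (\ref{expandedone}),
\begin{equation*}
\left[T_{(A,\alpha)}, T_{(B,\beta)}\right] = \lambda_\alpha \lambda_\beta \, \left[T_A, T_B\right].
\end{equation*}
By hypothesis (\ref{ressemigroup}), the semigroup piece lies in $\check{S}_p \cdot \hat{S}_q \subset \bigcap_{r \in i_{(p,q)}} \hat{S}_r$, while by the subspace resonance (\ref{decomposition}) the Lie-algebra piece lies in $\left[V_p, V_q\right] \subset \bigoplus_{r \in i_{(p,q)}} V_r$. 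Expanding on the $S$-expansion basis and matching indices, every surviving term is of the form $\lambda_\gamma T_C$ with $\lambda_\gamma \in \hat{S}_r$ and $T_C \in V_r$ for some $r \in i_{(p,q)}$, which belongs to $\hat{S}_r \times V_r \subset \hat{\mathfrak{g}}_R$.

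With this inclusion established, the three commutator patterns (\ref{commreduced1})--(\ref{commreduced3}) of Definition \ref{defreduced} are realized: $\left[\check{\mathfrak{g}}_R, \check{\mathfrak{g}}_R\right]$ and $\left[\hat{\mathfrak{g}}_R, \hat{\mathfrak{g}}_R\right]$ may in general spill across both summands (as Definition \ref{defreduced} explicitly permits), but $\left[\check{\mathfrak{g}}_R, \hat{\mathfrak{g}}_R\right]$ is forced into $\hat{\mathfrak{g}}_R$. Definition \ref{defreduced} then supplies, automatically, the Jacobi identity for the $\check{\mathfrak{g}}_R \to \check{\mathfrak{g}}_R$ part of the structure constants, so the generators of $\check{\mathfrak{g}}_R$ close, up to the $\hat{\mathfrak{g}}_R$-valued tail, into a genuine Lie algebra, namely the reduced algebra $\vert \check{\mathfrak{g}}_R \vert$.

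The only real obstacle I expect is the bookkeeping: one must keep track of two nested layers of decomposition (over $p \in I$ and over the partition $S_p = \hat{S}_p \cup \check{S}_p$) simultaneously, and verify that the intersection $\bigcap_{r \in i_{(p,q)}} \hat{S}_r$ on the semigroup side pairs correctly, term by term, with the direct sum $\bigoplus_{r \in i_{(p,q)}} V_r$ on the algebra side when the product $\lambda_\alpha \lambda_\beta [T_A,T_B]$ is expanded on the $S$-expansion basis. Once the indices are handled carefully, no ingredient beyond hypotheses (\ref{intzero})--(\ref{ressemigroup}) and the underlying resonance conditions (\ref{decomposition})--(\ref{groupdecomposition}) is needed.
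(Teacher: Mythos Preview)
Your proposal is correct and follows essentially the same approach that the paper indicates: the paper does not actually prove this theorem in full but cites it from \cite{Izaurieta}, and its only comment is the one-line remark ``Indeed, $\mathfrak{g}_R$ fits the definition of reduced algebra (\ref{defreduced})'', which is precisely the strategy you carry out in detail. Your verification of the inclusion $[\check{\mathfrak{g}}_R,\hat{\mathfrak{g}}_R]\subset\hat{\mathfrak{g}}_R$ via the pairing of the intersection condition (\ref{ressemigroup}) with the subspace decomposition (\ref{decomposition}) is exactly the argument needed, and your bookkeeping remark about matching the single semigroup element $\lambda_\alpha\lambda_\beta\in\bigcap_r\hat{S}_r$ against each $V_r$-component of $[T_A,T_B]$ identifies the only subtle point.
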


Indeed, $\mathfrak{g}_R$ fits the definition of reduced algebra (\ref{defreduced}).
Using the structure constants for the resonant subalgebra, it is then possible to find the structure constants for the reduced algebra $\vert \check{\mathfrak{g}}_R \vert$, as shown in \cite{Izaurieta}.

Let us observe that when every $S_p \subset S$ of a resonant subalgebra includes the zero element $\lambda_{0_S}$, the choice $\hat{S}_p = \lbrace \lambda_{0_S} \rbrace$ automatically satisfies the conditions (\ref{intzero}) and (\ref{ressemigroup}). As a consequence of this, the $0_S$-reduction previously reviewed can be regarded as a particular case of Theorem \ref{teoiza}.

Theorem \ref{teoiza}, namely, Theorem VII.1 of \cite{Izaurieta}, has been then used by the author of the same paper to recover Theorem 3 of Ref. \cite{Azcarraga1} in the context of Lie algebras contractions.

Considering the semigroup $S^{(N)}_E = \lbrace \lambda_\alpha , \alpha = 0, \ldots, N+1 \rbrace$, provided with the multiplication rule
\begin{equation}
\lambda_\alpha \lambda_ \beta = \lambda_{H_{N+1}(\alpha + \beta)},
\end{equation}
where $H_{N+1}$ is defined as the function
\begin{equation}
H_n (x) =  \left\{ \begin{aligned} &
x , \;\;\;\;\; \text{when} \;  x < n,
\\ & n , \;\;\;\;\; \text{when} \; x\geq n, \end{aligned} 
\right.
\end{equation}
and where $\lambda_{N+1}$ is the zero element in $S^{(N)}_E$, namely $\lambda_{N+1}= \lambda_{0_S}$, the authors of \cite{Izaurieta} performed $S$-expansions using $S=S_E$ and found resonant partitions for $S_E$, constructing resonant subalgebras $\mathfrak{g}_R$'s and then applying a $0_S$-reduction to the resonant subalgebras, in different cases. Their aim, in this context, was to recover some results presented for algebra expansions in \cite{Azcarraga1} within the $S$-expansion approach.

In particular, they analyzed the case in which a Lie algebra $\mathfrak{g}$ fulfills the Weimar-Woods conditions \cite{WW1, WW2}, namely when one can write the subspace decomposition $\mathfrak{g}= \bigoplus_{p=0}^n V_p$ of the Lie algebra $\mathfrak{g}$ for which the following Weimar-Woods conditions hold:
\begin{equation}
\left[V_p , V_q \right] \subset \bigoplus_{p=0}^{H_n (p+q)} V_r .
\end{equation}
In this context, the authors of \cite{Izaurieta} considered a subset decomposition of $S_E$
\begin{equation}\label{subsetdec}
S_E = \bigcup_{p=0}^n S_p ,
\end{equation}
where the subsets $S_p \subset S_E$ were defined by
\begin{equation}
S_p = \lbrace \lambda_{\alpha_p}, \; \alpha_p = p , \ldots, N+1 \rbrace ,
\end{equation}
with $N+1 \geq n$. The subset decomposition (\ref{subsetdec}) is a resonant one under the semigroup product (\ref{prod}), since it satisfies
\begin{equation}
S_p \cdot S_q = S_{H_n(p+q)} \subset \bigcap_{r=0}^{H_n(p+q)} S_r.
\end{equation}
Thus, according to Theorem \ref{Tres} (which corresponds to Theorem IV.2 of \cite{Izaurieta}), the direct sum
\begin{equation}
\mathfrak{g}_R = \bigoplus_ {p=0}^n W_p,
\end{equation}
with
\begin{equation}
W_p = S_p \times V_p ,
\end{equation}
is a resonant subalgebra of $\mathfrak{g}$.
Then, they considered the following $S_p$ partition, which satisfies (\ref{intzero}):
\begin{equation}\label{part1}
\check{S}_p = \lbrace \lambda_{\alpha_p}, \; \alpha_p = p , \ldots , N_p \rbrace , 
\end{equation}
\begin{equation}\label{part2}
\hat{S}_p = \lbrace \lambda_{\alpha_p}, \; \alpha_p = N_p + 1, \ldots , N+1 \rbrace ,
\end{equation}
and they proved that the reduction condition (\ref{ressemigroup}) on (\ref{part1}) and (\ref{part2}) is equivalent to the following requirement on the $N_p$'s:
\begin{equation}
N_{p+1} = \left\{ \begin{aligned} &
N_p , \;\;  \text{or} 
\\ & H_{N+1}(N_p+1) .
\end{aligned} 
\right.
\end{equation}
This condition is exactly the one obtained in Theorem 3 of Ref. \cite{Azcarraga1} (requiring that the expansion in the Maurer-Cartan forms closes). In the $S$-expansion context, the case
\begin{equation}
N_{p+1} = N_p = N+1
\end{equation}
for each $p$ corresponds to the resonant subalgebra, and the case
\begin{equation}
N_{p+1} = N_p = N
\end{equation}
to its $0_S$-reduction \cite{Izaurieta}. 
The generalized In\"{o}n\"{u}-Wigner contraction corresponds to the case 
\begin{equation}
N_p = p
\end{equation}
(see Ref. \cite{Azcarraga1} for details). This, as stated in \cite{Izaurieta}, means that the generalized In\"{o}n\"{u}-Wigner contraction does not correspond to a resonant subalgebra, but to its reduction. This is an important point, because, the author of \cite{Izaurieta} have been able to define non-trace invariant tensors for resonant subalgebras and $0_S$-reduced algebras, but \textit{not} for general reduced algebras.

As we will see in the following, the prescription for infinite $S$-expansion with ideal subtraction we develop in the present paper leads to reduced algebras. In particular, the subtraction of the infinite ideal subalgebra can be viewed as a $0_S$-reduction (since it reproduces the same result) involving an infinite number of semigroup elements that, together with the generators associated, play the role of ``generating zeros". In this context, as we will discuss, we are able to write the invariant tensors of the target algebras in terms of those on the original ones.

We can thus conclude that, in the context of finite $S$-expansion, the path for reproducing a generalized In\"{o}n\"{u}-Wigner contraction consists in extracting reduced algebras from the resonant subalgebra of an $S$-expanded algebra.

\subsection{Infinite $S$-expansion with ideal subtraction}\label{infexp}

As we have previously discussed, in Ref. \cite{Izaurieta} generalized In\"{o}n\"{u}-Wigner contractions have been realized as reductions of $S$-expanded algebras with a finite semigroup.

In the following, we will discuss a new prescription for $S$-expansion, involving an \textit{infinite} abelian semigroup and the subsequent subtraction of an infinite ideal. This scenario also offers an alternative view on the generalized In\"{o}n\"{u}-Wigner contraction procedure. As we have already mentioned in the introduction, the subtraction of the infinite ideal in this context is crucial, since it allows to obtain Lie (super)algebras with a finite number of generators, after having infinitely expanded the original Lie (super)algebras. 

Before proceeding to the development of our method, let us spend a few words on the general idea. Thus, let us remind that the generalized In\"{o}n\"{u}-Wigner contraction (see Ref. \cite{Lukierski}) can be performed when we consider a non-simple algebra $\mathfrak{g}$ decomposed into $n+1$ subspaces $V_i$ of generators $T_a^{(i)}$ ($i=0,1,\ldots , n$), namely
\begin{equation}\label{KM1}
 \mathfrak{g}= V_0 \oplus V_1 \oplus \ldots \oplus V_n  ,
\end{equation}
where the conditions (\ref{conditionsWW}) (namely the Weimar-Woods conditions \cite{WW1, WW2}) are satisfied. 

The generalized In\"{o}n\"{u}-Wigner contraction of (\ref{KM1}) is obtained by properly rescaling each generator $T^{(i)}_a$ by a power of the contraction parameter $\varepsilon$, namely
\begin{equation}
T_a^{(i)} \in V_i \longrightarrow \varepsilon^{a_i} T_a^{(i)} \in V'_i,
\end{equation}
where the choice of $a_i$ provides finite limits of the contracted algebra when $\varepsilon\longrightarrow0$. 

With this in mind, in our method we first of all consider an \textit{infinite} $S$-expansion procedure, namely an $S$-expansion which involves an \textit{infinite abelian semigroup} $S^{(\infty)}$, which will be defined in the following.

Then, we will show how to confer the role of ``generating zeros'' to a particular infinite set of generators, associating them with different elements of the infinite abelian semigroup $S^{(\infty)}$. 
This will need to pass through \textit{infinite resonant subalgebras}, as we will discuss in a while. We will then be able to subtract an infinite ideal subalgebra from an infinite resonant subalgebra of the infinitely $S$-expanded algebra.
We will explicitly show that the ideal subtraction corresponds to a reduction.
This procedure will also be able to reproduce the same result which would have been obtained by having performed a generalized In\"{o}n\"{u}-Wigner contraction. 

We observe that the algebra we end up after the ideal subtraction, in general, is \textit{not} a subalgebra of the starting algebra. It is, instead, a reduced algebra. In particular, the ideal subtraction reproduces the same result of a $0_S$-reduction.

In this context, we are taking into account the following operation on a given algebra $\mathcal{A}$:
\begin{align}\label{idealalgebra}
\mathcal{A}\ominus\mathcal{I}=\mathcal{A}_0,
\end{align}
where $\mathcal{A}_0$ generates a coset space, and where $\mathcal{I}$ is an ideal (ideal subalgebra) of $\mathcal{A}$, namely a subalgebra of $\mathcal{A}$ that satisfies the property $\left[\mathcal{A},\mathcal{I}\right]\subset\mathcal{I}$ (see Subsection \ref{ideal1} for further details).

We now explain in detail our method involving an infinite $S$-expansion with subsequent ideal subtraction.

\subsubsection{General formulation of the method of infinite $S$-expansion with ideal subtraction}

As said in \cite{Caroca}, if, in the $S$-expansion procedure, the finite semigroup is generalized to the case of an
infinite semigroup, then the $S$-expanded algebra will be an \textit{infinite-dimensional algebra}.
One can this see from the fact that $T_{(A,\alpha)} = \lambda_\alpha T_A$ constitutes a base for the $S$-expanded
algebra and from the fact that $\alpha$ takes now the values in an infinite set.

We can thus generate an infinitely $S$-expanded algebra as a loop-like Lie algebra (see Ref. \cite{Caroca}), where the semigroup elements can be represented by the set $(\mathbb{N},+)$, \footnote{The loop algebra \cite{Caroca} was constructed by considering the semigroup $(\mathbb{Z},+)$ (which is an abelian group with the sum operation). In our work, we restrict to $(\mathbb{N},+)$ and we leave the extension to $(\mathbb{Z},+)$ to the future.} that presents the same multiplication rules (extended to an infinite set) of the general semigroup $S^{(N)}_E = \lbrace{ \lambda_\alpha\rbrace}_{\alpha=0}^{N+1}$, namely $\lambda_\alpha \lambda_\beta = \lambda_{\alpha+\beta}$ if $\alpha+\beta \leq N+1$, and $\lambda_\alpha \lambda_\beta = \lambda_{N+1}$ if $\alpha+\beta > N+1$. \footnote{Different semigroups of the type $S^{(N)}_E$ have been used and discussed in several works on $S$-expanded algebras (see Refs. \cite{Izaurieta,CPRS1,Concha1,Diego,Concha:2016hbt,Pat2,CR2,Concha2}).}

\begin{definition}\label{definftysemigr}
Let $\left\{\lambda_\alpha\right\}_{\alpha=0}^{\infty}=\left\{\lambda_0,\lambda_1,\lambda_2, \ldots, \lambda_\infty \right\}$ be an infinite discrete set of elements. \footnote{We have used the notation $\lambda_\infty$ just for denoting the fact that we are considering an infinite set, in the sense that it contains an infinite number of elements.} Then, the infinite set $\left\{\lambda_\alpha\right\}_{\alpha=0}^{\infty}$ satisfying commutation rules like the ones of the set $(\mathbb{N},+)$ (that is, of $S^{(N)}_E$), namely
\begin{equation}\label{prodsemigrinfinito}
\lambda_\alpha \lambda_\beta = \lambda_{\alpha + \beta},
\end{equation}
where 
\begin{equation}\label{infelementprod}
\lambda_\alpha \lambda_{\infty} = \lambda_\infty , \quad \forall \lambda_\alpha \in \left\{\lambda_\alpha\right\}_{\alpha=0}^{\infty}, \quad \quad \lambda_\infty \lambda_\infty = \lambda_\infty ,
\end{equation}
is an infinite abelian semigoup.
\end{definition}

Hereafter, we will denote such an infinite abelian semigroup by $S^{(\infty)} = \left\{\lambda_\alpha\right\}_{\alpha=0}^{\infty}$.
Let us notice that since the multiplication rules in (\ref{infelementprod}) hold, the element $\lambda_\infty \in S^{(\infty)}$ can be regarded as an ``ideal element" of the infinite semigroup $S^{(\infty)}$.

Now, let $\mathfrak{g} = \bigoplus_{p \in I} V_p$ be a subspace decomposition of $\mathfrak{g}$.

With the above assumptions, we perform an infinite $S$-expansion on $\mathfrak{g}$ using the semigroup $S^{(\infty)}$, and the infinite $S$-expanded algebra can be rewritten as
\begin{align}
\mathfrak{g}_S^\infty&=\left\{\lambda_\alpha\right\}^\infty_{\alpha=0}\times\mathfrak{g}= \nonumber\\
&=\left\{\lambda_\alpha\right\}^\infty_{\alpha=0}\times \left[ \bigoplus_{p \in I} V_p \right] \label{suma}.
\end{align}

Let us observe that the Jacobi identity is fulfilled for the infinite $S$-expanded algebra, since the starting algebra satisfies the Jacobi identity and the semigroup $S^{(\infty)}$ is abelian (and associative by definition): These are the requirements that the starting algebra and the semigroup involved in the procedure must satisfy so that the $S$-expanded algebra satisfies the Jacobi identity when performing an $S$-expansion process (see Ref. \cite{Izaurieta} for further details).

At this point, one can split the infinite semigroup in subsets in such a way to be able to properly extract a resonant subalgebra from the infinitely $S$-expanded one. It will be then possible, according to the procedure described in \cite{Izaurieta}, to define partitions on these subsets such that one can isolate ad ideal structure from the resonant sublagebra of the infinitely $S$-expanded algebra, reproducing a reduction and ending up, in this way, with a finite algebra. This procedure, as we will see, is also able to reproduce a generalized In\"{o}n\"{u}-Wigner contraction. 

We will now describe the general development of the method, which is based on the following steps:
\begin{enumerate}
\item Properly define subsets $S_p$ of $S^{\infty}$ such that they satisfy the resonant condition (\ref{groupdecomposition}), in order to be able to extract a resonant subalgebra from $\mathfrak{g}^\infty_S$ (by using Theorem \ref{Tres}).
\item Explicitly show that the ideal subtraction we want to apply satisfies the requirements of a reduction (in the sense of Definition \ref{defreduced}). To this aim, one must define a $S_p$ partition such that the conditions (\ref{intzero}) and (\ref{ressemigroup}) are fulfilled, in order to extract a reduced algebra from the resonant subalgebra of the infinitely $S$-expanded one. This reduced algebra will be our target, while the (infinite) ideal subalgebra will be taken apart. 
\end{enumerate}

Depending on the subspace decomposition structure of the original algebra $\mathfrak{g}$, the subset decomposition of $S^{(\infty)}$ in subsets $S_p$ will assume different forms in order to satisfy the resonance condition, allowing the extraction of a resonant subalgebra. We will carry on the discussion in general in the following, in order to expose our method, while in Section \ref{Examples} we will study some examples with different subspace decomposition structures, case by case. 

To proceed with the extraction of the infinite resonant subalgebra, according with the review we have presented in Section \ref{Review} and to the approach presented in \cite{Izaurieta}, we must define a resonant subset decomposition, under the product (\ref{prod}), of the infinite semigroup $S^{(\infty)}$
\begin{equation}\label{subsetdecinf1}
S^{(\infty)} = \bigcup_{p \in I} S_p,
\end{equation}
namely a decomposition such that (\ref{groupdecomposition}) is fulfilled, where the $S_p$'s are infinite subsets.

Once the resonant subset decomposition has been found, the direct sum
\begin{equation}
\mathfrak{g}^\infty_R = \bigoplus_{p \in I} W_p ,
\end{equation}
with 
\begin{equation}
W_p = S_p \times V_p , \quad p \in I ,
\end{equation}
is a \textit{resonant subalgebra} of $\mathfrak{g}^\infty_S$ (where we have used Theorem \ref{Tres}). 

In particular, $\mathfrak{g}^\infty_R = \bigoplus_{p\in I} W_p = \bigoplus_{p \in I} S_p \times V_p$ is the direct sum of a finite number of infinite subspaces $W_p$, which are infinite due to the fact that the subsets $S_p$'s contains an infinite amount of semigroup elements.

Now, according with the procedure described in \cite{Izaurieta}, we can develop the following theorem:
\begin{theorem}\label{TeoRedIdeal}
Let $\mathfrak{g}$ be a Lie (super)algebra and let $\mathfrak{g}^\infty_S = S^{(\infty)}\times \mathfrak{g}$ be the infinite $S$-expanded (super)algebra obtained using the infinite abelian semigroup $S^{(\infty)} = \lbrace \lambda_\alpha , \; \alpha = 0 , \ldots , \infty \rbrace$. 

Let $\mathfrak{g}^\infty_R$ be an infinite resonant subalgebra of $\mathfrak{g}^\infty_S$ and let $\mathcal{I}$ be an infinite ideal subalgebra of $\mathfrak{g}^\infty_R$.
Then, the (super)algebra
\begin{equation}
\check{\mathfrak{g}}_R = \mathfrak{g}^\infty_R \ominus \mathcal{I}
\end{equation}
is a reduced (super)algebra.
\end{theorem}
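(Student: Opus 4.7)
The plan is to reduce the statement to a direct verification of the hypotheses of Definition \ref{defreduced}. First, I would fix a vector-space complement $\check{\mathfrak{g}}_R$ to $\mathcal{I}$ inside $\mathfrak{g}^\infty_R$, so that every $T \in \mathfrak{g}^\infty_R$ decomposes uniquely as $T = T^{(0)} + T^{(1)}$ with $T^{(0)} \in \check{\mathfrak{g}}_R$ and $T^{(1)} \in \mathcal{I}$. This complement realizes the coset space $\mathfrak{g}^\infty_R \ominus \mathcal{I}$ of Subsection \ref{ideal1} as an honest subspace of $\mathfrak{g}^\infty_R$, which is the object on which I want to read off bracket relations.

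Next I would set $V_0 = \check{\mathfrak{g}}_R$, $V_1 = \mathcal{I}$ in the notation of Definition \ref{defreduced} and check the three bracket conditions \eqref{commreduced1}--\eqref{commreduced3}. Since $\mathcal{I}$ is a subalgebra, $[\mathcal{I},\mathcal{I}]\subset\mathcal{I}$, i.e.\ the $V_0$-component in \eqref{commreduced3} vanishes. Since $\mathcal{I}$ is an ideal of $\mathfrak{g}^\infty_R$, we have $[\mathfrak{g}^\infty_R,\mathcal{I}]\subset\mathcal{I}$; restricting the first slot to $\check{\mathfrak{g}}_R$ gives $[\check{\mathfrak{g}}_R,\mathcal{I}]\subset\mathcal{I}$, which is exactly \eqref{commreduced2}. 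Finally, $[\check{\mathfrak{g}}_R,\check{\mathfrak{g}}_R]\subset\mathfrak{g}^\infty_R=\check{\mathfrak{g}}_R\oplus\mathcal{I}$ trivially matches \eqref{commreduced1}, with both components generically present. Thus the pair $(\check{\mathfrak{g}}_R,\mathcal{I})$ fits the template for a reduction exactly.

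At this point Definition \ref{defreduced} applies verbatim: the $\check{\mathfrak{g}}_R$-projection of $[\check{\mathfrak{g}}_R,\check{\mathfrak{g}}_R]$ defines structure constants that satisfy the Jacobi identity on their own, so $\check{\mathfrak{g}}_R$ carries a Lie (super)algebra structure, namely the reduced (super)algebra of $\mathfrak{g}^\infty_R$. If one wished to verify Jacobi directly rather than invoking the definition as a black box, the argument is the standard one: writing $[[T^{(0)}_A,T^{(0)}_B],T^{(0)}_C]$ using the decomposition, the contributions coming from the $\mathcal{I}$-part of $[T^{(0)}_A,T^{(0)}_B]$ land back in $\mathcal{I}$ when bracketed with $T^{(0)}_C$ (by the ideal property), and hence are killed by the projection onto $\check{\mathfrak{g}}_R$; summing cyclically, the full Jacobiator of $\mathfrak{g}^\infty_R$ projects to the Jacobiator of $\check{\mathfrak{g}}_R$, which is therefore zero.

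The main obstacle I anticipate is conceptual rather than computational: one must emphasize that $\mathfrak{g}^\infty_R\ominus\mathcal{I}$ is to be understood as a coset, and that, consistently with the remark following Definition \ref{defreduced}, $\check{\mathfrak{g}}_R$ is in general \emph{not} a subalgebra of $\mathfrak{g}^\infty_R$ but is naturally isomorphic to the quotient $\mathfrak{g}^\infty_R/\mathcal{I}$ whose existence as a Lie (super)algebra is precisely what the ideal property of $\mathcal{I}$ guarantees. The infinite-dimensional nature of $\mathfrak{g}^\infty_R$ and $\mathcal{I}$ introduces no further difficulty, since Definition \ref{defreduced} is purely algebraic and the required bracket inclusions are inherited directly from the subalgebra and ideal properties, without any use of finiteness.
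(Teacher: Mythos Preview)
Your argument is correct and, in fact, more direct than the paper's. You work purely at the level of Definition \ref{defreduced}: choose any linear complement $\check{\mathfrak{g}}_R$ of $\mathcal{I}$ in $\mathfrak{g}^\infty_R$, and the ideal property $[\mathfrak{g}^\infty_R,\mathcal{I}]\subset\mathcal{I}$ immediately gives the key condition $[V_0,V_1]\subset V_1$, while the remaining two bracket inclusions are automatic. This is essentially the observation that the quotient of a Lie (super)algebra by an ideal is again a Lie (super)algebra, rephrased in the ``reduced algebra'' language of Definition \ref{defreduced}.

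The paper proceeds differently. Rather than invoking Definition \ref{defreduced} for an abstract complement, it works through the semigroup combinatorics: it assumes one can write each $S_p$ as $\check S_p\cup\hat S_p$ with $\check S_p$ finite and $\hat S_p$ infinite satisfying conditions (\ref{intzero})--(\ref{ressemigroup}), applies Theorem \ref{teoiza} to obtain $[\check{\mathfrak{g}}_R,\hat{\mathfrak{g}}^\infty_R]\subset\hat{\mathfrak{g}}^\infty_R$, and then \emph{identifies} $\mathcal{I}$ with $\hat{\mathfrak{g}}^\infty_R$ once the extra closure $[\hat{\mathfrak{g}}^\infty_R,\hat{\mathfrak{g}}^\infty_R]\subset\hat{\mathfrak{g}}^\infty_R$ holds. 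Your route is shorter and applies to any ideal of $\mathfrak{g}^\infty_R$, not only those arising from a semigroup partition; the paper's route, while more roundabout as a proof of the bare statement, is constructive and exhibits the specific $\check S_p/\hat S_p$ split that is actually used in all the applications of Section \ref{Examples} and that underlies the interpretation of ideal subtraction as a generalized $0_S$-reduction.
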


\begin{proof}
After having performed the infinite $S$-expansion on $\mathfrak{g}$, obtaining the infinite $S$-expanded (super)algebra $\mathfrak{g}^\infty_S = S^{(\infty)}\times \mathfrak{g}$, and after having extracted a resonant subalgebra $\mathfrak{g}^\infty_R $ from $\mathfrak{g}^\infty_S$ as described above, we can write an $S_p$ partition $S_p = \hat{S}_p \cup \check{S}_p$, where the $\check{S}_p$'s are finite subsets, while the $\hat{S}_p$'s are infinite ones, satisfying
the conditions (\ref{intzero}) and (\ref{ressemigroup}).
Once such a partition has been found, it induces, according with Theorem \ref{teoiza}, the following decomposition on the resonant subalgebra $\mathfrak{g}^\infty_R$:
\begin{equation}
\mathfrak{g}^\infty_R = \check{\mathfrak{g}}_R \oplus \hat{\mathfrak{g}}^\infty_R ,
\end{equation}
where
\begin{equation}
\check{\mathfrak{g}}_R = \bigoplus_{p\in I} \check{S}_p \times V_p , 
\end{equation}
\begin{equation}
\hat{\mathfrak{g}}^\infty_R = \bigoplus_{p \in I} \hat{S}_p \times V_p.
\end{equation}
Let us observe that $\check{\mathfrak{g}}_R$ is \textit{finite}, since it is the direct sum of products between finite subsets and finite subspaces, while $\hat{\mathfrak{g}}^\infty_R$ is \textit{infinite}, due to the fact that the $\hat{S}_p$'s are infinite subsets.

Then, applying Theorem \ref{teoiza}, we have
\begin{equation}\label{reducedcond1}
\left[\check{\mathfrak{g}}_R , \hat{\mathfrak{g}}^\infty_R\right] \subset  \hat{\mathfrak{g}}^\infty_R ,
\end{equation}
and, therefore, $\vert \check{\mathfrak{g}}_R \vert$ correspond to a reduced (super)algebra of $\mathfrak{g}^\infty_S$.

Furthermore, in the case in which we have
\begin{equation}
\left[\hat{\mathfrak{g}}^\infty_R , \hat{\mathfrak{g}}^\infty_R \right] \subset \hat{\mathfrak{g}}^\infty_R ,
\end{equation}
that is to say, when $\hat{\mathfrak{g}}^\infty_R$ is an infinite subalgebra of $\mathfrak{g}^\infty_R$ (and, consequently, an infinite subalgebra of $\mathfrak{g}^\infty_S$), $\hat{\mathfrak{g}}^\infty_R$ is, in particular, an \textit{infinite ideal subalgebra} (due to the fact that it also satisfies (\ref{reducedcond1})), in the sense described in Section \ref{Review}. 
This allows us to write
\begin{equation}
\check{\mathfrak{g}}_R = \mathfrak{g}^\infty_R \ominus \mathcal{I},
\end{equation}
where we have denoted by $\mathcal{I}$ the infinite ideal subalgebra, $\mathcal{I}\equiv \hat{\mathfrak{g}}^\infty_R$, and where $\check{\mathfrak{g}}_R$ corresponds to the reduced algebra one end up with at the end of the procedure.

We have thus demonstrated that the subtraction of an infinite ideal subalgebra from an infinite resonant subalgebra of an infinite $S$-expanded (super)algebra (reached by using the infinite semigroup $S^{(\infty)}$) corresponds to a \textit{reduction}, leading to a \textit{reduced algebra}.

\end{proof}

The ideal subtraction, which is crucial since it allows to end up with a Lie algebra with a finite number of generators, thus satisfies the requirements of a reduction, in the sense given in Section \ref{Review} in the context of reduced algebras. As we have already mentioned, a reduced algebra, in general, does \textit{not} correspond to a subalgebra.

In particular, \textit{the ideal subtraction can be viewed as a (generalization of the) $0_S$-reduction}, in the sense that all the elements of the infinite ideal are mapped to zero after the ideal subtraction; This has the same effect given by the zero element $\lambda_{0_S}$ of a semigroup (see Section \ref{Review}), namely
\begin{equation}
\lambda_{0_S} T_A = 0 .
\end{equation}
This is what we meant when we said that we are conferring the role of ``generating zeros'' to a particular infinite set of generators, that is the ones belonging to the infinite ideal. Thus, the reduced algebra $\check{\mathfrak{g}}_R$ can be viewed, in this sense, as a $0_S$-reduced algebra.

Using the structure constants that one can write for the resonant subalgebra, it is then possible to find the
structure constants for the ($0_S$)-reduced algebra, as it was presented in \cite{Izaurieta}.

As we will see in Section \ref{Examples}, the method we have described above is reliable and can be applied to algebras already considered in the literature in the context of expansion and contraction, reproducing the same results shown in the literature.
The main point consists in choosing properly the resonant partition of the semigroup $S^{(\infty)}$ and then the $S_p$ partition $S_p = \check{S}_p \cup \hat{S}_p$ (namely, in selecting in a proper way the subsets $\check{S}_p$ and $\hat{S}_p$), in order to be able to extract an infinite ideal subalgebra $\mathcal{I}$ from the infinite resonant subalgebra $\mathfrak{g}^\infty_R$.

\subsubsection{Formulation of the method for reproducing a generalized In\"{o}n\"{u}-Wigner contraction}\label{GENSE}

Let us now apply our method to the case in which the original Lie algebra $\mathfrak{g}$ can be decomposed into $n+1$ subspaces
\begin{equation}
\mathfrak{g} = V_0 \oplus V_1 \oplus \ldots \oplus V_n 
\end{equation}
and satisfies the Weimar-Woods conditions \cite{WW1, WW2}
\begin{equation}
\left[V_p,V_q\right]\subset\bigoplus_{s\leq p+q}V_s,\quad p,q=0,1,\ldots,n .
\end{equation}

We will now discuss how to properly choose the subset partition of $S^{(\infty)}$ and we apply our method of infinite $S$-expansion with ideal subtraction in order to show that the generalized In\"{o}n\"{u}-Wigner contraction fits our scheme. We proceed methodically with the following steps:
\begin{enumerate}
\item We perform an infinite $S$-expansion with the infinite abelian semigroup $S^{(\infty)}$, endowed with the multiplication rule (\ref{prodsemigrinfinito}), on the original algebra $\mathfrak{g}$ satisfying the Weimar-Woods conditions.
\item We properly define subsets $S_p$ of $S^{(\infty)}$ such that they satisfy the resonant condition (\ref{groupdecomposition}) in the case in which the starting algebra $\mathfrak{g}$ satisfies the Weimar-Woods conditions, in order to be able to extract a resonant subalgebra from $\mathfrak{g}^\infty_S$ (by using Theorem \ref{Tres}). The resonant condition (\ref{groupdecomposition}) reads as follows when the Weimar-Woods conditions hold:
\begin{equation}\label{rescond}
S_p \cdot S_q \subset \bigcap_{r\leq p+q} S_r.
\end{equation}
\item We define a $S_p$ partition such that the conditions (\ref{intzero}) and (\ref{ressemigroup}) are fulfilled, in order to be able to extract a reduced algebra from the resonant subalgebra of the infinitely $S$-expanded one. This will be done through ideal subtraction. The reduced algebra will be our target, reproducing, in this way, a generalized In\"{o}n\"{u}-Wigner contraction, while the ideal subalgebra will be taken apart. 
\end{enumerate}
In this way, the whole procedure consisting in $S$-expanding with an infinite semigroup and subtracting the ideal will be able to reproduce a generalized In\"{o}n\"{u}-Wigner contraction.

We first of all perform the infinite $S$-expansion on $\mathfrak{g}$, obtaining the infinitely $S$-expanded algebra
\begin{equation}\label{sumanew}
\mathfrak{g}^\infty_S = S^{(\infty)} \times \mathfrak{g} =\left( \lbrace \lambda_\alpha \rbrace _0 ^\infty \times V_0 \right) \oplus \left( \lbrace \lambda_\alpha \rbrace_0 ^\infty \times V_1 \right) \oplus \ldots \oplus \left( \lbrace \lambda_\alpha \rbrace_0^\infty \times V_n \right).
\end{equation}

In order to proceed with the extraction of the infinite resonant subalgebra, we must split the semigroup $S^{(\infty)}$ into $n+1$ \textit{infinite} subsets $S_p$ such that, when $\mathfrak{g}$ satisfies the Weimar-Woods conditions (that is our case), the condition (\ref{rescond}) is fulfilled, where $S_p \cdot S_q$ denotes the set of all products of all elements of $S_p$ and all elements of $S_q$. Thus, we must define such a decomposition for the infinite semigroup $S^{(\infty)}$. 

Now, let
\begin{equation}\label{subsetdecinf}
S^{(\infty)} = \bigcup_{p=0}^n S_p
\end{equation}
be a subset decomposition of $S^{(\infty)}$, where the subsets $S_p \subset S^{(\infty)}$ are defined by
\begin{equation}
S_p = \lbrace \lambda_{\alpha_p}, \; \alpha_p = p, \ldots, \infty  \rbrace , \quad p = 0, \ldots n.
\end{equation}
This can also be visualized, for making it clearer, as
\begin{align}
S^{(\infty)} & = S_0 \cup S_1 \cup S_2 \cup \ldots \cup S_n = \nonumber \\
& = \lbrace \lambda_0 , \lambda_1 , \ldots , \lambda_\infty \rbrace \cup \lbrace \lambda_1 , \lambda_2, \ldots , \lambda_\infty \rbrace \cup \lbrace \lambda_2, \lambda_3 , \ldots, \lambda_ \infty \rbrace \cup \ldots \cup \lbrace \lambda_n , \lambda_{n+1}, \ldots , \lambda_\infty \rbrace.
\end{align}
The subset decomposition (\ref{subsetdecinf}) is a \textit{resonant} one under the semigroup product (\ref{prod}), since it satisfies (\ref{rescond}).
Thus, according to Theorem \ref{Tres}, the direct sum
\begin{equation}
\mathfrak{g}^\infty_R = \bigoplus_{p=0}^n W_p ,
\end{equation}
with 
\begin{equation}
W_p = S_p \times V_p ,
\end{equation}
is a \textit{resonant subalgebra} of $\mathfrak{g}^\infty_S$. 

Let us now consider the resonant subalgebra $\mathfrak{g}^\infty_R = \bigoplus_{p=0}^n W_p = \bigoplus_{p=0}^n S_p \times V_p$ and write the following $S_p$ partition: $S_p = \hat{S}_p \cup \check{S}_p$, where
\begin{equation}\label{partinf1}
\check{S}_p = \lbrace \lambda_{\alpha_p} , \; \alpha_p = p \rbrace  \equiv \lbrace \lambda_p \rbrace ,
\end{equation}
\begin{equation}\label{partinf2}
\hat{S}_p = \lbrace \lambda_{\alpha_p} , \; \alpha_p= p+1,\ldots , \infty  \rbrace .
\end{equation}
This $S_p$ partition satisfies
\begin{equation}\label{firstcond}
\hat{S}_p \cap \check{S}_p = \emptyset ,
\end{equation}
which is exactly the condition (\ref{intzero}).
The second condition which must be fulfilled in order to be able to extract a reduced algebra from the resonant subalgebra $\mathfrak{g}^\infty_S$ when the original algebra $\mathfrak{g}$ satisfies the Weimar-Woods conditions reads
\begin{equation}\label{secondcond}
\check{S}_p \cdot \hat{S}_q \subset \bigcap _{r \leq p+q} \hat{S}_r .
\end{equation} 
In the present case, $\check{S}_p$ and $\hat{S}_q$ are respectively given by
\begin{equation}
\check{S}_p = \lbrace \lambda_{p} \rbrace , 
\end{equation}
\begin{equation}
\hat{S}_q = \lbrace \lambda_{\alpha_q}, \; \alpha_q = q+1, \ldots , \infty \rbrace .
\end{equation}
Thus, the condition (\ref{secondcond}) is fulfilled, since, in this case,
\begin{equation}
\bigcap _{r=0}^{p+q} \hat{S}_r= \hat{S}_{p+q},
\end{equation}
where $\hat{S}_{p+q} = \lbrace \lambda_{p+q+m}, m =1, \ldots, \infty \rbrace$, and
\begin{equation}
\check{S}_p \cdot \hat{S}_q = \hat{S}_{p+q} ,
\end{equation}
where we have taken into account the product (\ref{prodsemigrinfinito}). We can thus conclude that the $S_p$ partition we have chosen satisfies the reduction condition, and we can now extract a reduced algebra from the resonant subalgebra $\mathfrak{g}^\infty_S$.

Indeed, what we have done by considering this particular $S_p$ partition induce, according with Theorem \ref{teoiza}, the following decomposition on the resonant subalgebra:
\begin{equation}
\mathfrak{g}^\infty_R = \check{\mathfrak{g}}_R \oplus \hat{\mathfrak{g}}^\infty_R ,
\end{equation}
where
\begin{equation}
\check{\mathfrak{g}}_R = \bigoplus_{p=0}^n \check{S}_p \times V_p , 
\end{equation}
\begin{equation}
\hat{\mathfrak{g}}^\infty_R = \bigoplus_{p=0}^n \hat{S}_p \times V_p.
\end{equation}
Let us observe that, according with what we have previously observed when describing the method in general, $\check{\mathfrak{g}}_R$ is \textit{finite}, since it is the direct sum of products between finite subsets and finite subspaces, while $\hat{\mathfrak{g}}^\infty_R$ is \textit{infinite}, due to the fact that the $\hat{S}_p$'s are infinite (sub)sets.

We can now write
\begin{equation}
\hat{\mathfrak{g}}^\infty_R = \bigoplus_{p=0}^n \hat{W}_p ,
\end{equation}
with
\begin{equation}
\hat{W}_p = \hat{S}_p \times V_p = S_{p+1} \times V_p ,
\end{equation}
where $S_{p+1}= \lbrace \lambda_{p+m}, \; m=1, \ldots , \infty \rbrace$. One can now easily prove that, by construction, we have
\begin{equation}
\left[\hat{\mathfrak{g}}^\infty_R , \hat{\mathfrak{g}}^\infty_R \right] \subset \hat{\mathfrak{g}}^\infty_R ,
\end{equation}
that is to say, $\hat{\mathfrak{g}}^\infty_R$ is an infinite subalgebra of $\mathfrak{g}^\infty_R$ (and, consequently, an infinite subalgebra of $\mathfrak{g}^\infty_S$). In particular, it is an \textit{ideal subalgebra}, since it also satisfies 
\begin{equation}
\left[\check{\mathfrak{g}}_R , \hat{\mathfrak{g}}^\infty_R \right] \subset \hat{\mathfrak{g}}^\infty_R .
\end{equation}
This allows us to write
\begin{equation}
\check{\mathfrak{g}}_R = \mathfrak{g}^\infty_R \ominus \mathcal{I},
\end{equation}
where we have denoted by $\mathcal{I}$ the infinite ideal subalgebra, $\mathcal{I}\equiv \hat{\mathfrak{g}}^\infty_R$, and where, applying Theorem \ref{TeoRedIdeal}, the algebra $\check{\mathfrak{g}}_R$ corresponds to a reduced algebra.


Since, as stated in \cite{Izaurieta} and previously reviewed in the present paper, the generalized In\"{o}n\"{u}-Wigner contraction corresponds to the reduction of a resonant subalgebra of the $S$-expanded one, we have thus reproduced a generalized In\"{o}n\"{u}-Wigner contraction by performing our method of infinite $S$-expansion with subsequent ideal subtraction.

We can summarize our approach saying that, after having performed an infinite $S$-expansion with the semigroup $S^{(\infty)}$ on an original Lie algebra $\mathfrak{g}$, we can reproduced the generalized 
In\"{o}n\"{u}-Wigner contraction by first of all extracting a resonant subalgebra $\mathfrak{g}^\infty_R$ from the infinitely $S$-expanded algebra $\mathfrak{g}^\infty_S = S^{(\infty)} \times \mathfrak{g}$, and by then subtracting an infinite ideal subalgebra 
from the resonant one. In this way, the procedure reproduce the result of a reduction of a resonant subalgebra of an (infinitely) $S$-expanded algebra, which, according with \cite{Izaurieta}, lets the generalized In\"{o}n\"{u}-Wigner contraction fit the $S$-expansion scheme. In this way, we have obtained an alternative view of the generalized In\"{o}n\"{u}-Wigner contraction process.

We will discuss some examples of application of our method in Section \ref{Examples}. We now move to the description of the way for finding the invariant tensors of the (super)algebras obtained with our method in terms of those of the original (super)algebras.

\subsection{Invariant tensors of (super)algebras obtained through infinite $S$-expansion with ideal subtraction}  
          
In Ref. \cite{Izaurieta}, the authors developed a theorem which describes how to write the components of the invariant tensor of a target algebra obtained through a (finite) $S$-expansion in terms of those of the initial algebra (precisely, we are referring to Theorem VII.1 of Ref. \cite{Izaurieta}). Then, in Theorem VII.2 of the same paper, they have given an expression for the invariant tensor for a $0_S$-reduced algebra.

Taking into account this result, we can now show how to write the invariant tensors for the (super)algebras which can be obtained by applying our method of infinite $S$-expansion with ideal subtraction.

\begin{theorem}\label{teo}
Let $\mathfrak{g}$ be a Lie (super)algebra of basis $\lbrace T_A \rbrace$ and let $\langle T_{A_0}\ldots T_{A_N}\rangle$ be an invariant tensor for $\mathfrak{g}$. 
Let $\mathfrak{g}^\infty_S = S^{(\infty)}\times \mathfrak{g}$ be the infinite $S$-expanded (super)algebra obtained using the infinite abelian semigroup $S^{(\infty)} = \lbrace \lambda_\alpha , \; \alpha = 0 , \ldots , \infty \rbrace$.
Let $\mathfrak{g}^\infty_R$ be an infinite resonant subalgebra of $\mathfrak{g}^\infty_S$ and let $\mathcal{I}$ be an infinite ideal subalgebra of $\mathfrak{g}^\infty_R$.
Then, 
\begin{equation}\label{topologicalInvariant}
\langle T^{\alpha_{p_0}}_{A_{p_0}}\ldots T^{\alpha_{p_N}}_{A{p_N}}\rangle =  \alpha^{m}\delta^{\alpha_{p_0} + \alpha_{p_1}+\alpha_{p_2}+\ldots + \alpha_{p_N}}_m \langle T_{A_0}\ldots T_{A_N}\rangle ,
\end{equation}
where $\alpha^{m}$ are arbitrary constants, 
corresponds to an invariant tensor for the finite (super)algebra
\begin{equation}
\check{\mathfrak{g}}_R = \mathfrak{g}^\infty_R \ominus \mathcal{I},
\end{equation}
having denoted the generators of $\check{\mathfrak{g}}_R$ by $\lambda_{\alpha_{p_i}} T_{A_{p_i}}\equiv T^{\alpha_{p_i}}_{A_{p_i}}$, with $i=0, \ldots, N $, and where the set $\lbrace \lambda_{\alpha_p} \rbrace$ is finite. 
\end{theorem}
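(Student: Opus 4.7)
The plan is to prove Theorem \ref{teo} in three stages that mirror the three-step construction of $\check{\mathfrak{g}}_R$: infinite $S$-expansion, extraction of a resonant subalgebra, and ideal subtraction. First, I would establish that the formula \eqref{topologicalInvariant} defines an invariant tensor on the full infinite $S$-expanded algebra $\mathfrak{g}^\infty_S = S^{(\infty)}\times \mathfrak{g}$ by extending Theorem VII.1 of \cite{Izaurieta} to the infinite-semigroup setting. The key observation is that the multiplication rule $\lambda_\alpha \lambda_\beta = \lambda_{\alpha+\beta}$ on $S^{(\infty)}$ yields an $N$-selector of the simple form $K_{\alpha_0 \alpha_1 \ldots \alpha_N}{}^{\gamma} = \delta^{\alpha_0 + \alpha_1 + \ldots + \alpha_N}_\gamma$, and substituting this into the general $S$-expansion formula for invariant tensors reproduces exactly the expression in \eqref{topologicalInvariant}.

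Next, I would verify invariance on $\mathfrak{g}^\infty_S$ by a direct computation. Pairing an auxiliary generator $T^\gamma_C$ against the structure constants \eqref{strconstant} and the proposed invariant tensor, the Kronecker deltas contract to yield an overall factor $\alpha^m \delta^{\gamma + \alpha_{p_0} + \ldots + \alpha_{p_N}}_m$ multiplying the invariance sum $\sum_i C^B_{C A_{p_i}} \langle T_{A_{p_0}} \ldots T_B \ldots T_{A_{p_N}}\rangle$ of the original $\mathfrak{g}$, which vanishes by assumption. Since $\mathfrak{g}^\infty_R$ is a subalgebra of $\mathfrak{g}^\infty_S$ by Theorem \ref{Tres}, the formula automatically restricts to an invariant tensor on $\mathfrak{g}^\infty_R$.

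Finally, the descent to $\check{\mathfrak{g}}_R = \mathfrak{g}^\infty_R \ominus \mathcal{I}$ would use Theorem \ref{TeoRedIdeal}, which has just established that the ideal subtraction acts as a generalized $0_S$-reduction, together with the argument of Theorem VII.2 of \cite{Izaurieta}: evaluating the invariant tensor only on the generators $T^{\alpha_{p_i}}_{A_{p_i}}$ with $\alpha_{p_i} \in \check{S}_{p_i}$ (i.e.\ on the finite set of semigroup labels surviving the reduction) yields a well-defined invariant form under the induced commutator on the quotient. The main obstacle lies precisely in this last step: the commutator in $\check{\mathfrak{g}}_R$ differs from that in $\mathfrak{g}^\infty_R$ by exactly the components landing in $\mathcal{I}$, and one must check that these dropped terms do not spoil invariance when paired with the surviving $\check{\mathfrak{g}}_R$-generators through the invariant tensor. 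The argument is that each dropped contribution carries a semigroup label $\alpha_{p_i}+\gamma$ strictly exceeding its subspace grading, and therefore belongs to some $\hat{S}_r$ rather than $\check{S}_r$; the Kronecker delta $\delta^{\gamma + \sum \alpha_{p_i}}_m$ then projects onto precisely the grading-matched pieces, whose cancellation follows from the original invariance of $\langle T_{A_0}\ldots T_{A_N}\rangle$ on $\mathfrak{g}$ combined with the Weimar--Woods structure controlling the filtration.
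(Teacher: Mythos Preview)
Your proposal is correct and follows essentially the same route as the paper: invoke Theorem~\ref{TeoRedIdeal} to identify the ideal subtraction with a $0_S$-reduction, and then apply Theorem~VII.2 of \cite{Izaurieta} to conclude that \eqref{topologicalInvariant} is an invariant tensor on $\check{\mathfrak{g}}_R$. The paper's own proof is considerably terser---it simply cites these two results and declares the conclusion straightforward---whereas you additionally spell out the intermediate step of establishing invariance on the full $\mathfrak{g}^\infty_S$ via the $N$-selector $K_{\alpha_0\ldots\alpha_N}{}^\gamma=\delta^{\alpha_0+\ldots+\alpha_N}_\gamma$ and restricting to $\mathfrak{g}^\infty_R$, which is a sound and instructive elaboration.

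One small over-specification: in your final paragraph you invoke the Weimar--Woods filtration to control the terms dropped in passing from $\mathfrak{g}^\infty_R$ to $\check{\mathfrak{g}}_R$. The theorem as stated does not assume Weimar--Woods conditions, and the paper does not use them here either. The general $0_S$-reduction argument of Theorem~VII.2 in \cite{Izaurieta} already handles this: once the ideal plays the role of the zero sector, the invariant tensor on the reduced algebra is obtained simply by omitting the components labelled by the ``zero'' elements, and invariance survives because those components decouple by the ideal property $[\check{\mathfrak{g}}_R,\mathcal{I}]\subset\mathcal{I}$. So your argument works, but the Weimar--Woods appeal can be dropped in favour of the cleaner ideal/$0_S$ mechanism you already cite.
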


\begin{proof}
The proof of Theorem \ref{teo} can be developed by applying Theorem \ref{TeoRedIdeal} of the present paper. Indeed, as stated in Theorem \ref{TeoRedIdeal}, we have that the subtraction of an infinite ideal subalgebra from an infinite resonant subalgebra of an infinitely $S$-expanded (super)algebra (using the semigroup $S^{(\infty)}$ on the original (super)algebra) corresponds to a reduction. In particular, we have seen that it reproduces the same result of a $0_S$-reduction. 

In this way, one can write the invariant tensor of the (super)algebra obtained with our method of infinite $S$-expansion with ideal subtraction by applying Theorem VII.2 of Ref. \cite{Izaurieta}, which, indeed, gives an expression for the invariant tensor for a $0_S$-reduced algebra.

Thus, it is straightforward to show that the invariant tensor for the (super)algebra $\check{\mathfrak{g}}_R=\mathfrak{g}_R^\infty\ominus\mathcal{I}$ can be written in the form
\begin{equation}
\langle T^{\alpha_{p_0}}_{A_{p_0}}\ldots T^{\alpha_{p_N}}_{A{p_N}}\rangle =  \alpha^{m}\delta^{\alpha_{p_0} + \alpha_{p_1}+\alpha_{p_2}+\ldots + \alpha_{p_N}}_m \langle T_{A_0}\ldots T_{A_N}\rangle ,
\end{equation}
being $\alpha^{m}$ arbitrary constants, where we have denoted the generators of $\check{\mathfrak{g}}_R$ by $\lambda_{\alpha_{p_i}} T_{A_{p_i}}\equiv T^{\alpha_{p_i}}_{A_{p_i}}$, with $i=0, \ldots, N$, and where the set $\lbrace \lambda_{\alpha_p} \rbrace$ is finite. 
\end{proof}

\section{Examples of application} \label{Examples}

In the following, we develop some examples of application, in which we replicate the effects of both standard and generalized In\"on\"u-Wigner contractions of some (super)algebras presented in the literature through our infinite $S$-expansion approach with ideal subtraction. 

We also find the invariant tensors of some of the mentioned (super)algebras by applying Theorem \ref{teo}.

\subsection{Symmetric cosets}

As shown in Section \ref{Review}, the standard In\"{o}n\"{u}-Wigner contraction can be applied to symmetric cosets of simple Lie algebras, \textit{i.e.} to cosets of Lie algebras which can be written as $\mathfrak{g}=\mathfrak{h}\oplus\mathfrak{p}$, where the following commutation relations hold:
\begin{align}
 \left[\mathfrak{h},\mathfrak{h}\right]\subset&\mathfrak{h},\nonumber\\
 \left[\mathfrak{h},\mathfrak{p}\right]\subset&\mathfrak{p},\\
  \left[\mathfrak{p},\mathfrak{p}\right]\subset&\mathfrak{h}.\nonumber
\end{align}
After having performed a standard In\"{o}n\"{u}-Wigner contraction ($\mathfrak{h}'= \mathfrak{h}'$, $\mathfrak{p}'= \varepsilon \mathfrak{p}$, and $\varepsilon \longrightarrow 0$) on $\mathfrak{g}$, we get the contracted commutation relations
\begin{align}
\left[\mathfrak{h}',\mathfrak{h}'\right]\subset&\mathfrak{h}',\nonumber\\
\left[\mathfrak{h}',\mathfrak{p}'\right]\subset&\mathfrak{p}',\label{commrelend}\\
\left[\mathfrak{p}',\mathfrak{p}'\right]=&0.\nonumber
\end{align}
The standard In\"{o}n\"{u}-Wigner contraction abelianizes the last commutator ($\left[\mathfrak{p}',\mathfrak{p}'\right]=0$), and thus produces an algebra that is non-isomorphic to the starting one. \footnote{Let us observe that in the case in which $\left[\mathfrak{p},\mathfrak{p}\right]=\mathfrak{h}\oplus\mathfrak{p}$,
that is to say, when we are taking into account a non-symmetric coset, the standard In\"{o}n\"{u}-Wigner contraction still abelianizes this commutator.}

With the method presented in this paper, we can reproduce the same result.
Indeed, the case of the standard In\"{o}n\"{u}-Wigner contraction from an algebra $\mathfrak{g}=\mathfrak{h}\oplus \mathfrak{p}$, where $\mathfrak{h}$ is a subalgebra and $\mathfrak{p}$ a symmetric coset, to an algebra $\check{\mathfrak{g}}_R= \mathfrak{h}'\ltimes \mathfrak{p}'=\mathfrak{h}\ltimes \mathfrak{p}'$ satisfying (\ref{commrelend}) can be reproduced by performing our prescription of infinite $S$-expansion with ideal subtraction. Indeed, after having infinitely $S$-expanded the original algebra using the semigroup $S^{(\infty)}$, we can write
\begin{equation}
\check{\mathfrak{g}}_R = \mathfrak{g}^\infty_R \ominus \mathcal{I},
\end{equation}
where
\begin{align}
\mathfrak{g}^\infty_R &= \lbrace \lambda_{2l} , \; l = 0, \ldots, \infty \rbrace \times \mathfrak{h} \oplus \lbrace \lambda_{2l+1} , \; l = 0, \ldots, \infty \rbrace \times \mathfrak{p} , \nonumber \\
 \mathcal{I}&= \lbrace \lambda_{2m} , \; m = 1, \ldots, \infty\rbrace \times \mathfrak{h} \oplus \lbrace \lambda_{2m+1} , \; m = 1, \ldots, \infty \rbrace \times \mathfrak{p} , \nonumber \\
 \check{\mathfrak{g}}_R&=\lbrace \lambda_0 \rbrace \times \mathfrak{h}\oplus \lbrace \lambda_1 \rbrace \times \mathfrak{p} = \mathfrak{h}' \oplus \mathfrak{p} ' .
\end{align}

\subsubsection{From the $AdS$ to the Poincar\'{e} algebra in $D=3$}

Let us see the particular feature described above through an explicit, physical example.

The Poincar\'e algebra $\mathfrak{iso}(D-1,1)$ can be obtained as a standard In\"{o}n\"{u}-Wigner contraction of the anti-de Sitter ($AdS$) algebra $ \mathfrak{so}(D-1,2)$.

We now consider the $AdS$ algebra in three dimensions, whose commutation relations read:
\begin{align}
\left[\tilde{J}_{a},\tilde{J}_{b}\right]=&\epsilon_{abc}\tilde{J}_{c},\nonumber\\
\left[\tilde{J}_{a},\tilde{P}_{b}\right]=& \epsilon_{abc}\tilde{P}_{c},\label{x1}\\
\left[\tilde{P}_{a},\tilde{P}_{b}\right]=&\epsilon_{abc}\tilde{J}_{c}.\nonumber
\end{align}
Then, we apply the infinite $S$-expansion procedure with ideal subtraction described in the present work, considering $V_0 = \lbrace \tilde{J}_a \rbrace$ and $V_1 = \lbrace \tilde{P}_a \rbrace$. After having infinitely $S$-expanded the original algebra with the semigroup $S^{(\infty)}$, we write 
\begin{align}
\mathfrak{g}^\infty_R &= \lbrace \lambda_{2l}, \; l =0, \ldots, \infty \rbrace \times V_0 \oplus \lbrace  \lambda_{2l+1}, \; l =0, \ldots, \infty  \rbrace \times V_1 , \nonumber \\
 \mathcal{I}&= \lbrace  \lambda_{2m}, \; m =1, \ldots, \infty  \rbrace \times V_0 \oplus \lbrace  \lambda_{2m+1}, \; m =1, \ldots, \infty  \rbrace \times V_1 , \nonumber \\
 \check{\mathfrak{g}}_R&=\lbrace \lambda_0 \rbrace \times V_0 \oplus \lbrace \lambda_1 \rbrace \times V_1 .
\end{align}
Subsequently, we rename the generators as follows:
\begin{align}
 \lambda_0 \tilde{J}_{a}=&J_{a},\\
 \lambda_1 \tilde{P}_{a}=&P_a. 
\end{align}
In this way, we obtain the Poincar\'e algebra $\mathfrak{iso}(2,1)$ in three dimensions, which can be written in terms of the following commutation relations:
\begin{align}
    \left[J_{ab},J_{cd}\right]=&\eta_{bc}J_{ad}-\eta_{ac}J_{bd}-\eta_{bd}J_{ac}+\eta_{ad}J_{bc},\nonumber\\
  \left[J_{ab},P_{c}\right]=& \eta_{bc}P_{a}-\eta_{ac}P_{b},\label{poincarealgebrak=1}\\
 \left[P_{a},P_{b}\right]=& 0.\nonumber
\end{align}
We have thus reached a Poincar\'e algebra from an $AdS$ algebra by performing an infinite $S$-expansion with ideal subtraction, reproducing, in this way, the effects of a standard In\"on\"u-Wigner contraction.

\subsection{From the $AdS$ to the Maxwell-like algebra $\mathcal{M}_5$}

The $AdS$ algebra $ \mathfrak{so}(D-1,2)$ has the set of generators $\left\{\tilde{J}_{ab},\tilde{P}_{a}\right\}$, and it is endowed with the following commutation relations:
\begin{align}
    \left[\tilde{J}_{ab},\tilde{J}_{cd}\right]=&\eta_{bc}\tilde{J}_{ad}-\eta_{ac}\tilde{J}_{bd}-\eta_{bd}\tilde{J}_{ac}+\eta_{ad}\tilde{J}_{bc},\nonumber \\
  \left[\tilde{J}_{ab},\tilde{P}_{c}\right]=& \eta_{bc}\tilde{P}_{a}-\eta_{ac}\tilde{P}_{b}, \nonumber \\
 \left[\tilde{P}_{a},\tilde{P}_{b}\right]=& \tilde{J}_{ab}. \label{adscommrel}
\end{align}

The Maxwell algebras type $\mathcal{M}_m$ (alternatively known as generalized Poincar\'{e} algebras, see Ref. \cite{CR2}) can be obtained as a finite $S$-expansion (with resonance and reduction) from the anti-de Sitter algebra $\mathfrak{so}(D-1,2)$, using semigroups of the type $S^{(N)}_E=\left\{\lambda_\alpha\right\}^{N+1}_{\alpha=0}$, which are endowed with the multiplication rules $\lambda_\alpha\lambda_\beta=\lambda_{\alpha+\beta}$ if $\alpha+\beta\leq N + 1$, and $\lambda_\alpha\lambda_\beta=\lambda_{N+1}$ if $\alpha+\beta > N + 1$.

The reduction involved in the procedure takes into account the presence of a zero element in the semigroup. This zero element is defined as $\lambda_{0_S}=\lambda_{N+1}$, and depends of the number $N$ in the semigroup.

We can now reach the Maxwell-like algebra $\mathcal{M}_5$ by performing our method of infinite $S$-expansion with ideal subtraction, starting from the $AdS$ algebra.

We follow the notation and the subspaces partition adopted in Ref. \cite{Concha1} and perform an infinite $S$-expansion with the infinite semigroup $S^{(\infty)}= \left\{\lambda_\alpha\right\}^{\infty}_{\alpha=0}$, obtaining
\begin{equation}
\mathfrak{g}_S^\infty=\left\{\lambda_\alpha\right\}^{\infty}_{\alpha=0}\times \left\{\tilde{J}_{ab},\tilde{P}_{a}\right\} = \left(\left\{\lambda_\alpha\right\}^{\infty}_{\alpha=0}\times V_0\right) \oplus \left( \left\{\lambda_\alpha\right\}^{\infty}_{\alpha=0}\times V_1 \right),
\end{equation}
where $V_0 = \lbrace \tilde{J}_{ab} \rbrace$ and $V_1 = \lbrace \tilde{P}_a \rbrace$.

Now, following our method, we can write (adopting the subspace partition of Ref. \cite{CR2}):
\begin{align}
\mathfrak{g}^\infty_R &= \lbrace \lambda_{2l}, \; l =0, \ldots, \infty \rbrace \times V_0 \oplus \lbrace  \lambda_{2l+1}, \; l =0, \ldots, \infty  \rbrace \times V_1 , \nonumber \\
 \mathcal{I}&= \lbrace  \lambda_{2m}, \; m =2, \ldots, \infty  \rbrace \times V_0 \oplus \lbrace  \lambda_{2m+1}, \; m =2, \ldots, \infty  \rbrace \times V_1 , \nonumber \\
 \check{\mathfrak{g}}_R&=\lbrace \lambda_0 , \lambda_2 \rbrace \times V_0 \oplus \lbrace \lambda_1 , \lambda_3 \rbrace \times V_1 .
\end{align}

Thus, after having renamed the generators in the following way: $\lambda_ 0 \tilde{J}_{ab}=J_{ab}$, $\lambda_1 \tilde{P}_{a}=P_{a}$, $\lambda_2 \tilde{J}_{ab}=Z_{ab}$, and $\lambda_3 \tilde{P}_{a}=Z_{a}$, we finally end up with the Maxwell-like algebra $\mathcal{M}_5$, which is indeed endowed with the commutation relations
\begin{align}
\left[J_{ab},J_{cd}\right]=&\eta_{bc}J_{ab}-\eta_{ac}J_{bd}-\eta_{bd}J_{ac}+\eta_{ad}J_{bc},\\
\left[Z_{ab},J_{cd}\right]=&\eta_{bc}Z_{ad}-\eta_{ac}Z_{bd}-\eta_{bd}Z_{ac}+\eta_{ad}Z_{bc},\\
  \left[Z_{ab},P_{c}\right]=& \eta_{bc}Z_{a}-\eta_{ac}Z_{b},\\
 \left[J_{ab},Z_{b}\right]=&\eta_{bc}Z_{a}-\eta_{ac}Z_{b},\\
\left[J_{ab},P_{c}\right]=& \eta_{bc}P_{a}-\eta_{ac}P_{b},\\
 \left[P_{a},P_{b}\right]=& Z_{ab},\\
 \left[Z_{ab},Z_{cd}\right]=&\left[Z_{a},Z_{b}\right]=0.
\end{align}
We have thus reached the Maxwell-like algebra $\mathcal{M}_5$ by performing an infinite $S$-expansion with ideal subtraction on an $AdS$ algebra.

\paragraph{Invariant tensor of the Maxwell-like algebra in $D=3$}\

In the three-dimensional case, considering equation (\ref{topologicalInvariant}), the components of the invariant tensor of the Maxwell-like algebra different from zero are the following ones:
\begin{align}
&\langle J_{ab} J_{cd}\rangle=\alpha^{0} \left(\eta_{ad}\eta_{bc}-\eta_{ac}\eta_{bd}\right),&
\langle J_{ab}  P_{c}\rangle=\alpha^{1} \epsilon_{abc},\\
&\langle Z_{ab} J_{cd}\rangle=\alpha^{2} \left(\eta_{ad}\eta_{bc}-\eta_{ac}\eta_{bd}\right),&
\langle J_{ab}  Z_{c}\rangle=\alpha^{3} \epsilon_{abc},\\
&\langle Z_{ab}  P_{c}\rangle=\alpha^{3} \epsilon_{abc},&
\langle P_{a}  P_{b}\rangle \ =\alpha^{2}\eta_{ab},
\end{align}
where $\alpha^m$, $m=0,1,2,3$, are arbitrary constants.
One could now construct a Lagrangian for a gravitational theory, as it was done in the literature (see Refs. \cite{Diego, Seba}). 

\subsection{Bargmann algebra and Newton-Hooke algebra}

In the following, we apply our method involving an infinite $S$-expansion with subsequent ideal subtraction in order to relate different algebras which have been objects of great interest in the literature (see Refs. \cite{Rosseel,galalg,Andriga,Andriga2,Gibbons,frodo,frodo2, Yves,Papageorgiou} for further details), namely the Poincar\'e and Galilean algebras, the Bargmann algebra, and Newton-Hooke algebra.

The Bargmann algebra $\mathfrak{b}(D-1,1)$ is the Galilean algebra \cite{galalg, Yves} augmented with a central generator $M$, \footnote{Namely, a generator that commutes with all other generators of the algebra. In $D = 3$, three such central generators can be introduced.} and can be obtained by performing a contraction on $\mathfrak{iso}(D-1,1)\oplus\mathfrak{g}_M$, where $\mathfrak{iso}(D-1,1)$ is the Poincar\'e algebra in $D$ dimensions, and where $\mathfrak{g}_M$ is a commutative subalgebra \cite{Andriga} spanned by a central generator $M$.

Some algebras involving the presence of the cosmological constant were deeply studied in the past years. 
The most relevant symmetry groups with the presence of a cosmological constant $\Lambda$ are the de Sitter ($dS$) and the anti-de Sitter ($AdS$) groups, that are related to relativistic symmetries which also involved the velocity of the light in the vacuum, $c$. In the non-relativistic limit, that is to say, when $c\longrightarrow\infty$ and $\Lambda\longrightarrow0$, with $-c^2\Lambda$ finite, the (centrally extended) $dS$ and the $AdS$ Lie algebras become centrally extended version of the so-called Newton-Hooke algebra, which describes Galilean physics with a cosmological constant $\lambda=-c^2\Lambda$ \cite{Papageorgiou}.
The central extensions of the Newton-Hooke Lie algebras can be obtained as contractions of trivial central extensions of the $dS$ and $AdS$ Lie algebras (see Ref. \cite{Papageorgiou}).

\subsubsection{The Bargmann algebra}

Let us now consider a central extension (with central generator $M$) of the Poincar\'e algebra in $\mathfrak{iso}(D-1,1)$ (this extension of the Poincar\'e algebra is a commutative algebra $\mathfrak{g}_M$ spanned by the central generator $M$), namely $\mathfrak{iso}(D-1,1) \oplus \mathfrak{g}_M$, in order to apply our procedure of infinite $S$-expansion with ideal subtraction and obtain the Bargmann algebra in three dimensions, $\mathfrak{b}(D-1,1)$. In this way, we will reproduce the result of a contraction with our method.

We thus start by redefining (according with the notation of \cite{Andriga}) the generators of the starting algebra $\mathfrak{iso}(D-1,1) \oplus \mathfrak{g}_M$ as follows:
\begin{equation}
J_{ij} = J_{ij}, \quad P_i = P_i, \quad J_{i0} = G_i , \quad P_0 = H \rightarrow H+M.
\end{equation}
We then write the following partition over subspaces:
\begin{align}
V_0 = &\lbrace{J_{ij} , H\rbrace}, \\
V_1 = &\lbrace{P_i,G_i \rbrace}, \\
V_2 = &\lbrace{M \rbrace}.
\end{align}
After that, we perform an infinite $S$-expansion with $S^{(\infty)}$, and, following the procedure described in Subsection \ref{infexp}, we consider the following resonant subset decomposition of the semigroup $S^{(\infty)}$:
\begin{align}
 S_0=&\left\{\lambda_0\right\}\cup\left\{\lambda_{1}, \ldots , \lambda_\infty \right\},\nonumber\\
S_1=&\left\{\lambda_1\right\}\cup\left\{\lambda_{2}, \ldots , \lambda_\infty \right\} , \\
S_2=&\left\{\lambda_2\right\}\cup\left\{\lambda_{3}, \ldots , \lambda_\infty \right\}.\nonumber
\end{align}
The resonant subalgebra $\mathfrak{g}^\infty_R=\check{\mathfrak{g}}_R \oplus \hat{\mathfrak{g}}^\infty_R$ results to be given by the direct sum of the following terms:
\begin{align}
 S_0\times V_0&=\lbrace\lambda_0 \rbrace \times V_0 \oplus \left\{\lambda_{1} , \ldots , \lambda_\infty\right\} \times V_0,\\
 S_1\times V_1&= \lbrace\lambda_1 \rbrace \times  V_1 \oplus \left\{\lambda_{2}, \ldots , \lambda_\infty \right\} \times V_1,\\
 S_2\times V_2&= \lbrace \lambda_2 \rbrace \times V_2 \oplus \left\{\lambda_3 , \ldots , \lambda_\infty \right\} \times V_2.
\end{align}

Then, the ideal reads
\begin{align}
 \mathcal{I}=&\left(\left\{\lambda_{1} , \ldots , \lambda_\infty\right\}\times V_0\right)\oplus\left(\left\{\lambda_{2} , \ldots , \lambda_\infty\right\}\times V_1\right)\oplus\left(\left\{\lambda_{3} , \ldots , \lambda_\infty\right\}\times V_2\right),
\end{align}
and we can perform the ideal subtraction on the resonant subalgebra $\mathfrak{g}^\infty_R$.

Thus, the target algebra $\check{\mathfrak{g}}_R$, which is, in this case, the Bargmann algebra $\mathfrak{b}(D-1,1)$, is given by $\check{\mathfrak{g}}_R=\mathfrak{g}_R^\infty\ominus\mathcal{I}$.

Indeed, after having properly renamed the generator as follows: $\lambda_0 J_{ij} = \tilde{J}_{ij}$, $\lambda_0 H = \tilde{H}$, $\lambda_1 P_i = \tilde{P}_i$, $\lambda_1 G_i = \tilde{G}_i$, and $\lambda_2 M = \tilde{M}$, we finally get
\begin{align}
 \left[\tilde{J}_{ij},\tilde{J}_{kl}\right]=& 4 \delta_{[i[k}\tilde{J}_{l]j]}, \nonumber \\
 \left[\tilde{J}_{ij},\tilde{G}_k\right]=&-2 \delta_{k[i}\tilde{G}_{j]} , \nonumber \\
  \left[\tilde{J}_{ij},\tilde{P}_k\right]=&-2 \delta_{k[i} \tilde{P}_{j]}, \label{Bargmannalgebra}\\
  \left[\tilde{G}_{i},\tilde{P}_{j}\right]=&-\delta_{ij}\tilde{M}, \nonumber \\
  \left[\tilde{G}_{i},\tilde{H}\right]=&-\tilde{P}_i , \nonumber  
\end{align}
that are precisely the commutation relations of the Bargmann algebra $\mathfrak{b}(D-1,1)$.
Let us also observe that for $M=0$ this is the Galilean algebra.

\subsubsection{Centrally extended Newton-Hooke algebra in $D=3$}

We now perform an infinite $S$-expansion with ideal subtraction on a central extension (with two central charges) of the $AdS$ algebra in $D=3$, in order to reach a centrally extended Newton-Hooke algebra in three dimensions (see Ref. \cite{Papageorgiou} for further details on the Newton-Hooke algebras). In this way, we will reproduce the result of a contraction by following our method.

In $D=3$, we can write the three-dimensional $AdS$ algebra as
\begin{align} 
   \left[J_{a},J_{b}\right]=&\epsilon_{abc}J^{c},\nonumber\\
  \left[J_{a},P_{b}\right]=&\epsilon_{abc}P^{c},\label{ads}\\
  \left[P_{a},P_{b}\right]=& \epsilon_{abc}J^{c}.\nonumber
\end{align}
We now rename the generators, using the two-dimensional epsilon symbol with non-zero entries $\epsilon_{12}=-\epsilon_{21}=1$, as follows:
\begin{align}\label{changeofbasis}
\bar{J}=&-J_0,\nonumber\\
K_i=&-\epsilon_{ij}J_j,\\
\bar{H} =&-P_0.\nonumber
\end{align}
The generator $K_i$, $i = 1, 2$, generate boosts in the $i$-th spatial direction.
Then, the algebra (\ref{ads}) reads
\begin{align}
  \left[K_{i},K_{j}\right]=&\epsilon_{ij}\bar{J},\nonumber\\
  \left[K_{i},P_{j}\right]=&\delta_{ij}\bar{H},\nonumber\\
  \left[P_{i},P_{j}\right]=&\epsilon_{ij}\bar{J},\nonumber\\
    \left[K_{i},\bar{J}\right]=&\epsilon_{ij}K_j,\label{adschangebasis}\\
  \left[K_{i},\bar{H}\right]=&P_i,\nonumber\nonumber\\
  \left[\bar{H},P_i\right]=& K_i,\nonumber\\
  \left[P_i,\bar{J}\right]=&\epsilon_{ij}P_j.\nonumber
\end{align}

We now consider two central extensions: $S$ and $M$. Then, according with \cite{Papageorgiou}, we define
\begin{equation}\label{basetildeNH1}
 H=\bar{H}-M, \quad J=\bar{J}-S,
\end{equation}
namely $\bar{H}=H+M$ and $\bar{J}=J+S$,
and we perform the following subspaces partition:
\begin{align}
V_0 =& \lbrace{J,H \rbrace}, \\
V_1 =& \lbrace{P_i,K_i \rbrace}, \\
V_2 =& \lbrace{M,S \rbrace} .
\end{align}

After that, applying the procedure developed in Subsection \ref{infexp}, we perform an infinite $S$-expansion with $S^{(\infty)}$ and we consider the following resonant subset decomposition of the semigroup $S^{(\infty)}$:
\begin{align}
 S_0=&\left\{\lambda_0\right\}\cup\left\{\lambda_{1}, \ldots , \lambda_\infty \right\},\nonumber\\
S_1=&\left\{\lambda_1\right\}\cup\left\{\lambda_{2}, \ldots , \lambda_\infty \right\} , \\
S_2=&\left\{\lambda_2\right\}\cup\left\{\lambda_{3}, \ldots , \lambda_\infty \right\}.\nonumber
\end{align}
The resonant subalgebra $\mathfrak{g}^\infty_R=\check{\mathfrak{g}}_R \oplus \hat{\mathfrak{g}}^\infty_R$ results to be given by the direct sum of the following terms:
\begin{align}
 S_0\times V_0&=\lbrace\lambda_0 \rbrace \times V_0 \oplus \left\{\lambda_{1} , \ldots , \lambda_\infty\right\} \times V_0,\\
 S_1\times V_1&= \lbrace\lambda_1 \rbrace \times  V_1 \oplus \left\{\lambda_{2}, \ldots , \lambda_\infty \right\} \times V_1,\\
 S_2\times V_2&= \lbrace \lambda_2 \rbrace \times V_2 \oplus \left\{\lambda_3 , \ldots , \lambda_\infty \right\} \times V_2.
\end{align}

Then, the ideal reads
\begin{align}
 \mathcal{I}=&\left(\left\{\lambda_{1} , \ldots , \lambda_\infty\right\}\times V_0\right)\oplus\left(\left\{\lambda_{2} , \ldots , \lambda_\infty\right\}\times V_1\right)\oplus\left(\left\{\lambda_{3} , \ldots , \lambda_\infty\right\}\times V_2\right),
\end{align}
and we can perform the ideal subtraction on the resonant subalgebra $\mathfrak{g}^\infty_R$.

The commutators of $\check{\mathfrak{g}}_R=\mathfrak{g}_R^\infty\ominus\mathcal{I}$ can be finally written as follows: 
\begin{align}
  \left[\tilde{K}_{i},\tilde{K}_{j}\right]=&\epsilon_{ij}\tilde{S},\nonumber\\
  \left[\tilde{K}_{i},\tilde{P}_{j}\right]=&\delta_{ij}\tilde{M},\nonumber\\
  \left[\tilde{P}_{i},\tilde{P}_{j}\right]=&\epsilon_{ij}\tilde{S},\nonumber\\
  \left[\tilde{K}_{i},\tilde{J}\right]=&\epsilon_{ij}\tilde{K}_j,\label{nh}\\
  \left[\tilde{K}_{i},\tilde{H}\right]=&\tilde{P}_i,\nonumber\\
  \left[\tilde{H},\tilde{P}_i\right]=&\tilde{K}_i,\nonumber\\
  \left[\tilde{P}_i,\tilde{J}\right]=&\epsilon_{ij}\tilde{P}_j,\nonumber
\end{align}
where we have defined $\lambda_0 J=\tilde{J}$, $\lambda_0 H = \tilde{H}$, $\lambda_1 P_i = \tilde{P}_i$, $\lambda_1 K_i = \tilde{K}_i$, $\lambda_2 M = \tilde{M}$, and $\lambda_2 S = \tilde{S}$.
These last commutation relations are those of a centrally extended version of the so-called Newton-Hooke algebra in three dimensions (see \cite{Papageorgiou} and references therein).

We have thus reached a centrally extended version of the Newton-Hooke algebra in $D=3$ by performing an infinite $S$-expansion with subsequent ideal subtraction starting from a central extension of the three-dimensional $AdS$ algebra.

\subsection{Non-standard Maxwell superalgebra in $D=3$ from the $AdS$ algebra $\mathfrak{osp}(2|1)\otimes \mathfrak{sp}(2)$}\label{nsmax}

The Maxwell superalgebra can be obtained as an In\"{o}n\"{u}-Wigner contraction of the $AdS$-Lorentz superalgebra, which is an $S$-expansion of the $AdS$ Lie algebra (see Ref. \cite{Seba} for further details). 

The non-standard Maxwell algebra \cite{Soroka2,Soroka3} can be recovered from the supersymmetric extension of the $AdS$-Lorentz algebra, by performing a suitable In\"{o}n\"{u}-Wigner contraction (see Ref. \cite{Lukierski}). 

In the following, we reproduce the non-standard Maxwell superalgebra in three dimensions, by performing an infinite $S$-expansion with $S^{(\infty)}$ on the $AdS$ superalgebra, and by subsequently removing an ideal. We also write the components of the invariant tensor of the target superalgebra in terms of those of the $AdS$ superalgebra. 

Let us thus consider the $AdS$ superalgebra in three dimensions, $\mathfrak{osp(2|1)}\otimes\mathfrak{sp(2)}$, generated by $\tilde{J}_{ab}$, $\tilde{P}_{a}$, and $\tilde{Q}_\alpha$, which satisfy the following commutation relations:
\begin{align}
 \left[\tilde{J}_{ab},\tilde{J}_{cd}\right]=&\eta_{bc}\tilde{J}_{ad}-\eta_{ac}\tilde{J}_{bd}-\eta_{bd}\tilde{J}_{ac}+\eta_{ad}\tilde{J}_{bc},\\
 \left[\tilde{J}_{ab}\tilde{P}_c\right]=&\eta_{bc}\tilde{P}_a-\eta_{ac}\tilde{P}_b,\\
 \left[\tilde{P}_a,\tilde{P}_b\right]=&\tilde{J}_{ab},\\
 \left[\tilde{P}_{a},\tilde{Q}_\alpha\right]=&\frac{1}{2}\left(\Gamma_a\tilde{Q}\right)_\alpha,\\
 \left[\tilde{J}_{ab},\tilde{Q}_\alpha\right]=&\frac{1}{2}\left(\Gamma_a\tilde{Q}\right)_\alpha,\\
 \left\{\tilde{Q}_\alpha,\tilde{Q}_\beta\right\}=&-\frac{1}{2}\left[\left(\Gamma^{ab}C\right)_{\alpha\beta}\tilde{J}_{ab}-2\left(\Gamma^aC\right)_{\alpha\beta}\tilde{P}_a\right].
\end{align}

We perform the following splitting of the $AdS$ superalgebra into two subspaces: $V_0 =\lbrace \tilde{J}_{ab}\rbrace $ and $V_1= \lbrace \tilde{P}_a ,\tilde{Q}_\alpha \rbrace$.
This subspace structure satisfies
\begin{align}
& [V_0, V_0] \subset V_0 , \nonumber \\ 
& [V_0, V_1] \subset V_1 , \nonumber \\ 
& [V_1, V_1] \subset V_0 \oplus V_1 .
\end{align}
We then consider the abelian semigroup $S^{(\infty)}$ endowed with the product (\ref{prodsemigrinfinito}) and we obtain an infinite-dimensional superalgebra as an infinite $S$-expansion of $\mathfrak{osp(2|1)}\otimes\mathfrak{sp(2)}$, using $S^{(\infty)}$.
Thus, a resonant partition of the semigroup $S^{(\infty)} = S_0 \cup S_1$ with respect to the product (\ref{prodsemigrinfinito}) is given by
\begin{equation}
S_p = \lbrace \lambda_{2m+p}, \; m = 0, \ldots , \infty \rbrace , \quad p =0, 1.
\end{equation}
Then, following our method, we obtain
\begin{align}
\mathfrak{g}^\infty_R &= \lbrace \lambda_{2l}, \; l=0, \ldots, \infty \rbrace \times V_0 \oplus \lbrace  \lambda_{2l+1},\; l=0, \ldots ,\infty  \rbrace \times V_1  , \nonumber \\
 \mathcal{I}&= \lbrace  \lambda_{2m}, \; m=2 , \ldots, \infty  \rbrace \times V_0 \oplus \lbrace  \lambda_{2m+1} , \; m =1, \ldots, \infty  \rbrace \times V_1   , \nonumber \\
 \check{\mathfrak{g}}_R&=\lbrace \lambda_0 , \lambda_2 \rbrace \times V_0 \oplus \lbrace \lambda_1 \rbrace \times V_1 .
\end{align}

If we now perform an identification and rename the generators of $\check{\mathfrak{g}}_R$ as follows:
\begin{align}
 J_{ab}&=\lambda_0\tilde{J}_{ab},\\
   Z_{ab}&=\lambda_2 \tilde{J}_{ab},  \\
  P_a&=\lambda_1 \tilde{P}_a, \\
 Q_\alpha &=\lambda_1 \tilde{Q}_\alpha ,
\end{align}
it is straightforward to show that we end up with a new superalgebra, $[\mathfrak{osp(2|1)}\otimes\mathfrak{sp(2)}]\ominus\mathcal{I}$, which corresponds to the non-standard Maxwell superalgebra \cite{Lukierski} and read
\begin{align}
   \left[J_{ab},J_{cd}\right]=&\eta_{bc}J_{ad}-\eta_{ac}J_{bd}-\eta_{bd}J_{ac}+\eta_{ad}J_{bc},\\
\left[J_{ab},Z_{cd}\right]=&\eta_{bc}Z_{ad}-\eta_{ac}Z_{bd}-\eta_{bd}Z_{ac}+\eta_{ad}Z_{bc},\\
  \left[J_{ab},P_c\right]=& \eta_{bc}P_a-\eta_{ac}P_b,\\
 \left[P_a,P_b\right]=& Z_{ab},\\
 \left[J_{ab},Q_\alpha\right]=& \frac{1}{2}\left(\Gamma_{ab}Q\right)_\alpha,\\ 
 \left[P_{a},Q_\alpha\right]=&0,\\
 \left[Z_{ab},Z_{cd}\right]=&0,\\
\left[Z_{ab},P_{c}\right]=&0,\\
\left[Z_{ab},Q_\alpha\right]=&0,\\
  \left\{Q_\alpha,Q_\beta\right\}=&-\frac{1}{2}\left(\Gamma^{ab}C\right)_{\alpha\beta}Z_{ab}.
\end{align}
We have thus reached the non-standard Maxwell superalgebra \cite{Lukierski} by performing an infinite $S$-expansion with ideal subtraction on the $AdS$ superalgebra. 

\paragraph{Invariant tensor of the non-standard Maxwell superalgebra in $D=3$}\

We write the components of the invariant tensor of the $AdS$ superalgebra (see Ref. \cite{Pat2}) as follows:
\begin{align}
 \langle \tilde{J}_{ab} \tilde{J}_{cd}\rangle=& \tilde{\mu}_{0}\left(\eta_{ad}\eta_{bc}-\eta_{ac}\eta_{bd}\right),\\
  \langle \tilde{J}_{ab} \tilde{P}_{c}\rangle=&\tilde{\mu}_{1} \epsilon_{abc},\\
 \langle \tilde{P}_{a}  \tilde{P}_{b}\rangle=&\tilde{\mu}_0\eta_{ab},\\
 \langle \tilde{Q}_\alpha \tilde{Q}_{\beta}\rangle=&\left(\tilde{\mu}_{0}-\tilde{\mu}_{1}\right)C_{\alpha\beta},
\end{align}
where $\tilde{\mu}_0$ and $\tilde{\mu}_1$ are arbitrary constants.
Using (\ref{topologicalInvariant}), the non-zero components of the invariant tensor for the non-standard Maxwell superalgebra can be written as

\begin{align}
 \langle J_{ab} J_{cd}\rangle=& \alpha^{0} \langle \tilde{J}_{ab} \tilde{J}_{cd}\rangle  =  \alpha^{0} \tilde{\mu}_{0}\left(\eta_{ad}\eta_{bc}-\eta_{ac}\eta_{bd}\right) \equiv \tilde{\alpha}^0 \left(\eta_{ad}\eta_{bc}-\eta_{ac}\eta_{bd}\right), \\
 \langle J_{ab} Z_{cd}\rangle=& \alpha^2 \langle \tilde{J}_{ab} \tilde{J}_{cd}\rangle = \alpha^2 \tilde{\mu}_{0}\left(\eta_{ad}\eta_{bc}-\eta_{ac}\eta_{bd}\right) \equiv \tilde{\alpha}^2  \left(\eta_{ad}\eta_{bc}-\eta_{ac}\eta_{bd}\right) ,\\
 \langle J_{ab}  P_{c}\rangle=&\alpha^{1} \langle \tilde{J}_{ab} \tilde{P}_{c}\rangle = \alpha^1 \tilde{\mu}_{1} \epsilon_{abc} \equiv \tilde{\alpha}^1 \epsilon_{abc},\\
 \langle P_{a}  P_{b}\rangle=& \alpha^2 \langle \tilde{P}_{a}  \tilde{P}_{b}\rangle= \alpha^{2}\tilde{\mu}_0\eta_{ab} \equiv \tilde{\alpha}^2 \eta_{ab},\\
 \langle Q_\alpha Q_{\beta}\rangle=& \alpha^{2} \tilde{\mu}_{0} C_{\alpha\beta} \equiv \tilde{\alpha}^2 C_{\alpha \beta},
\end{align}

where $\alpha^m$, $m=0,1,2$, are arbitrary constants, and where we have defined 
\begin{align}
 \tilde{\alpha}^0 \equiv & \alpha^0 \tilde{\mu}_0 , \\  \tilde{\alpha}^1 \equiv & \alpha^1 \tilde{\mu}_1 , \\
  \tilde{\alpha}^2 \equiv & \alpha^2 \tilde{\mu}_0.
\end{align}

\subsection{PP-wave (super)algebra in $D=11$ from the $AdS_4\times S^{7}$ (super)algebra}

The (super-)PP-wave algebra in eleven dimensions can be obtained through a generalized In\"on\"u-Wigner contraction of the $AdS_4\times S^{7}$ (super)algebra in $D=11$ (see Ref. \cite{ppwave1}). \footnote{This can be viewed as the Penrose limit of the $AdS \times S$ metric.}

In the following, we will reach the same result by performing an infinite $S$-expansion with ideal subtraction. Indeed, as shown in Section \ref{Generalized}, our prescription of infinite $S$-expansion with ideal subtraction is able to reproduce a generalized In\"{o}n\"{u}-Wigner contraction.

\subsubsection{PP-wave algebra from the $AdS_4\times S^{7}$ algebra}

The $AdS_4\times S^7$ algebra in $D=11$ can be written in the following traditional form (see Ref. \cite{ppwave1}):
\begin{align}
\left[P_a,P_b\right]=&4J_{ab}, & \left[P_{a'},P_{b'}\right]=&4J_{a'b'},\\
\left[J_{ab},P_c\right]=&2\eta_{bc}P_a, & \left[J_{a'b'},P_{c'}\right]=&2\eta_{b'c'}P_{a'},\\
 \left[J_{ab},J_{cd}\right]=&4\eta_{ad}J_{bc}, & \left[J_{ab},J_{cd}\right]=&4\eta_{ad}J_{bc},
\end{align}
where the vector index of $AdS_4$ is $i=0,1,2,3$, and that of $S^7$ is $a'=4,5,6,7,8,9,\sharp$.
 
Let us follow Ref. \cite{ppwave1} and define the light cone components of the momenta $P$'s and boost generators $P^*$'s as
\begin{align}
 P_{\pm}\equiv &\frac{1}{\sqrt{2}}\left(P_{\natural}\pm P_0 \right),\\
 P_m=&\left(P_i,P_{i'}\right),\\
 P_m^*=&\left(P^*_{i}=J_{i0},P^*_{i'}=J_{i'\natural}\right).
\end{align}
Thus, we can write the commutation relations of $AdS_4\times S^7$ as follows:
\begin{align}
 \left[P_i,P_+\right]=&2\sqrt{2}P_i^*, & \left[P_{i'},P_{+}\right]=&-\frac{1}{\sqrt{2}}P_{i}^*,\nonumber\\
 \left[P_{i}^*,P_{+}\right]=&-\frac{1}{\sqrt{2}}P_{i}, &\left[P_{i'}^*,P_{+}\right]=&\frac{1}{\sqrt{2}}P_{i'}, \nonumber\\
 \left[P_{i},P_{-}\right]=&-2\sqrt{2}P_{i}^*, &\left[P_{i'},P_{-}\right]=&-\frac{1}{\sqrt{2}}P_{i'}^* ,\nonumber\\
 \left[P_{i}^*,P_{-}\right]=&\frac{1}{\sqrt{2}}P_i, & \left[P_{i'}^*,P_{-}\right]=&\frac{1}{\sqrt{2}}P_{i'},\nonumber\\
 \left[P_{i}^*,P_{j}\right]=&-\frac{1}{\sqrt{2}}\eta_{ij}\left(P_- -P_+\right), & \left[P_{i'}^*,P_{j'}\right]=&-\frac{1}{\sqrt{2}}\eta_{i'j'}\left(P_-+P_+\right),\label{ppwavecommutators}\\
 \left[P_{i},P_{j}\right]=&4J_{ij}, & \left[P_{i'},P_{j'}\right]=&-J_{i'j'},\nonumber\\
 \left[P_{i}^*,P_{j}^*\right]=&J_{ij}, & \left[P_{i'}^*,P_{j'}^*\right]=&-J_{i'j'} ,\nonumber\\
\left[J_{ij},P_{k}\right]=&2\eta_{jk}P_{i}, & \left[J_{i'j'},P_{k'}\right]=&2\eta_{j'k'}P_{i'},          \nonumber \\
\left[J_{ij},P_{k}^*\right]=&2\eta_{jk}P_{i}^*, & \left[J_{i'j'},P_{k'}^*\right]=&2\eta_{j'k'}P_{i'}^*,      \nonumber \\
\left[J_{ij},J_{kl}\right]=&4\eta_{il}J_{jk}, &     \left[J_{i'j'},J_{k'l'}\right]=&4\eta_{i'l'}J_{j'k'}. \nonumber
\end{align}
Before applying the infinite $S$-expansion method with subsequent ideal subtraction, we consider the following subspaces partition of the $AdS_4\times S^7$ algebra:
\begin{align}
 V_0=&\left\{P_-,J_{ij},J_{i'j'}\right\},\nonumber\\
 V_1=&\left\{P_i,P_{i'},P_i^*,P_{i'}^*\right\},\label{spacesPP}\\
 V_2=&\left\{P_+ \right\}.\nonumber
 \end{align}

In this scenario, the Weimar-Woods conditions hold. 
Thus, following the procedure developed in Subsection \ref{infexp} in the case in which the original algebra satisfy the Weimar-Woods conditions, we now perform an infinite $S$-expansion with the semigroup $S^{(\infty)}$ on the algebra $\mathfrak{g}=AdS_4\times S^{7}$, considering the following resonant subset decomposition of the semigroup $S^{(\infty)}$:
\begin{align}
 S_0=&\left\{\lambda_0\right\}\cup\left\{\lambda_{1}, \ldots , \lambda_\infty \right\},\nonumber\\
S_1=&\left\{\lambda_1\right\}\cup\left\{\lambda_{2}, \ldots , \lambda_\infty \right\} ,\label{groupPP}\\
S_2=&\left\{\lambda_2\right\}\cup\left\{\lambda_{3}, \ldots , \lambda_\infty \right\}.\nonumber
\end{align}
The resonant subalgebra $\mathfrak{g}^\infty_R=\check{\mathfrak{g}}_R \oplus \hat{\mathfrak{g}}^\infty_R$ results to be given by the direct sum of the following terms:
\begin{align}
 S_0\times V_0&=\lbrace\lambda_0 \rbrace \times V_0 \oplus \left\{\lambda_{1} , \ldots , \lambda_\infty\right\} \times V_0,\\
 S_1\times V_1&= \lbrace\lambda_1 \rbrace \times  V_1 \oplus \left\{\lambda_{2}, \ldots , \lambda_\infty \right\} \times V_1,\\
 S_2\times V_2&= \lbrace \lambda_2 \rbrace \times V_2 \oplus \left\{\lambda_3 , \ldots , \lambda_\infty \right\} \times V_2.
\end{align}

Then, the ideal reads
\begin{align}
 \mathcal{I}=&\left(\left\{\lambda_{1} , \ldots , \lambda_\infty\right\}\times V_0\right)\oplus\left(\left\{\lambda_{2} , \ldots , \lambda_\infty\right\}\times V_1\right)\oplus\left(\left\{\lambda_{3} , \ldots , \lambda_\infty\right\}\times V_2\right),
\end{align}
and we can perform the ideal subtraction on the resonant subalgebra $\mathfrak{g}^\infty_R$.

Let us notice that the most relevant step consists in writing the commutation relations
\begin{align}
 \left[\lambda_1P_{i}^*,\lambda_1P_{j}\right]=&-\frac{1}{\sqrt{2}}\eta_{ij}\left(\lambda_2P_--\lambda_2P_+\right),\\
 \left[\lambda_1P_{i'}^*,\lambda_1P_{j'}\right]=&-\frac{1}{\sqrt{2}}\eta_{i'j'}\left(\lambda_2P_--\lambda_2P_+\right),
\end{align}
where $\lambda_2P_-$ belongs to the ideal, while $\lambda_2P_+$ does not.

Thus, after the ideal subtraction, we end up with the algebra $\mathfrak{g}_{PP}$ (the PP-wave algebra):
\begin{align}
\mathfrak{g}_{PP}\equiv \check{\mathfrak{g}}_R=\mathfrak{g}_R^\infty\ominus\mathcal{I}.
\end{align}
Indeed, if we now rename the generators as follows: $\lambda_0J_{ij}=\tilde{J}_{ij}$, $\lambda_0J_{i'j'}=\tilde{J}_{i'j'}$, $\lambda_0P_-=\tilde{P}_-$, $ \lambda_1P_{i}=\tilde{P}_i$,  $\lambda_1P_{i'}=\tilde{P}_{i'}$, $\lambda_1P_{i}^*=\tilde{P}_{i}^*$, $\lambda_1P_{i'}^*=\tilde{P}_{i'}^*$, $\lambda_2P_+=\tilde{P}_+$, we can finally write the PP-wave algebra in terms of the following commutation relations (we adopt the notation of Ref. \cite{ppwave1}):
\begin{align}
 \left[\tilde{P}_i,\tilde{P}_-\right]=&-2\sqrt{2}\tilde{P}_i^*, & \left[\tilde{P}_{i'},\tilde{P}_-\right]=&-\frac{1}{\sqrt{2}}\tilde{P}_{i}^*, \nonumber \\
 \left[\tilde{P}_m^*,\tilde{P}_{-}\right]=&-\frac{1}{\sqrt{2}}\tilde{P}_{m}, & \left[\tilde{P}_{m}^*,\tilde{P}_{n}\right]=&-\frac{1}{\sqrt{2}}\eta_{mn}\tilde{P}_{+}, \nonumber\\
 \left[\tilde{J}_{mn},\tilde{J}_{pq}\right]=&\;4\eta_{mq}\tilde{J}_{np}, & \left[\tilde{J}_{mn},\tilde{P}_{p}^*\right]=&\;2\eta_{np}\tilde{P}^*_m , \nonumber \\
 \left[\tilde{J}_{mn},\tilde{J}_{pq}\right]=&\;4\eta_{mq}\tilde{J}_{np}. 
\end{align}
We have thus reached the PP-wave algebra in $D=11$ with an infinite $S$-expansion with ideal subtraction, starting from the $AdS_4\times S^{7}$ algebra, reproducing, in this way, a generalized In\"on\"u-Wigner contraction.

\subsubsection{Super-PP-wave algebra from the $AdS_4\times S^{7}$ superalgebra}

We now extend the previous analysis to the supersymmetric case.
We thus consider the addition of fermionic generators to the $AdS_4\times S^{7}$ algebra, and then perform an infinite $S$-expansion with consequent ideal subtraction. 

The commutators involving the supercharges $Q$'s can be decomposed in the following way:
\begin{equation}
Q =Q_+ + Q_-, \;\;\; Q_\pm = Q_\pm\mathcal{P}_\pm ,
\end{equation}
using the light cone projection operators
\begin{align}
 \mathcal{P}_\pm=&\frac{1}{2}\Gamma_\pm\Gamma_\mp,\\
 \Gamma_\pm\equiv&\frac{1}{\sqrt{2}}\left(\Gamma_\natural\pm\Gamma_0\right),
\end{align}
where the $\Gamma$'s are the gamma matrices in eleven dimensions.
Now we add $Q_\pm$ into the subspaces partition, before applying the ideal subtraction, and we thus write
\begin{align}
 V_0=&\left\{P_-,J_{mn},Q_-\right\},\nonumber\\
 V_1=&\left\{P_m,P_{m}^*,Q_+\right\},\label{distribucion}\\
 V_2=& \left\{P_+ \right\}.\nonumber
 \end{align}
In this way, proceeding analogously to the previous (bosonic) case,
performing an infinite $S$-expansion involving the infinite abelian semigroup $S^{(\infty)}$ and subsequently subtracting the ideal
\begin{align}
 \mathcal{I}=&\left(\left\{\lambda_{1} , \ldots , \lambda_\infty\right\}\times V_0\right)\oplus\left(\left\{\lambda_{2} , \ldots , \lambda_\infty\right\}\times V_1\right)\oplus\left(\left\{\lambda_{3} , \ldots , \lambda_\infty\right\}\times V_2\right),
\end{align}
we end up with the following commutation relations:
\begin{align}
\left[\lambda_0P_-,\lambda_1Q_+\right]=&-\frac{3}{2\sqrt{2}}\lambda_{1}Q_+I,\\
\left[\lambda_0P_{-},\lambda_0Q_-\right]=&-\frac{1}{2\sqrt{2}}\lambda_{0}Q_-I\\
  \left[\lambda_1P_i,\lambda_0Q_-\right]=&\frac{1}{\sqrt{2}}\lambda_{1}Q_+\Gamma^-I\Gamma_i,\\ 
  \left[\lambda_1P_{i'},\lambda_0Q_-\right]=&\frac{1}{2\sqrt{2}}\lambda_{1}Q_+\Gamma^-I\Gamma_{i'},\\
\left[\lambda_1P^*_{m},\lambda_0Q_-\right]=&\frac{1}{2\sqrt{2}}\lambda_{1}Q_+\Gamma_m\Gamma^-,\\ 
\left[\lambda_0J_{mn},\lambda_0Q_\pm\right]=&\frac{1}{2}\lambda_{0}Q_\pm\Gamma_{mn},\\
\left\{\lambda_1Q_+,\lambda_1Q_+\right\}=&-2\mathcal{C}\Gamma^+\lambda_{1}P_+,\\
\left\{\lambda_0 Q_-,\lambda_0 Q_-\right\}=&-2\mathcal{C}\Gamma^- \lambda_0 P_- - \sqrt{2}\mathcal{C}\Gamma^-I\Gamma^{ij}\lambda_{0}J_{ij}+\frac{1}{\sqrt{2}}\mathcal{C}\Gamma^-I\Gamma^{i'j'}\lambda_{0}J_{i'j'},\\
\left\{\lambda_1Q_+,\lambda_0Q_-\right\}=&\left(-2\mathcal{C}\Gamma^m\lambda_{1}P_m-4\mathcal{C}I\Gamma^i\lambda_{1}P^*_i-2\mathcal{C}I\Gamma^{i'}\lambda_{1}P_{i'}^*\right)\lambda_0 \mathcal{P}_-.
\end{align}
Then, we only have to properly rename the generators in order to find the algebra
\begin{align}
\left[\tilde{P}_-,\tilde{Q}_+\right]=&-\frac{3}{2\sqrt{2}} \tilde{Q}_+I,\nonumber\\
\left[\tilde{P}_{-},\tilde{Q}_-\right]=&-\frac{1}{2\sqrt{2}}\tilde{Q}_-I\nonumber\\
  \left[\tilde{P}_i,\tilde{Q}_-\right]=&\frac{1}{\sqrt{2}} \tilde{Q}_+\Gamma^-I\Gamma_i,\nonumber\\ 
  \left[\tilde{P}_{i'},\tilde{Q}_-\right]=&\frac{1}{2\sqrt{2}} \tilde{Q}_+\Gamma^-I\Gamma_{i'},\nonumber\\
\left[\tilde{P}^*_{m},\tilde{Q}_-\right]=&\frac{1}{2\sqrt{2}} \tilde{Q}_+\Gamma_m\Gamma^-, \label{ffpp}\\ 
\left[\tilde{J}_{mn},\tilde{Q}_\pm\right]=&\frac{1}{2}\tilde{Q}_\pm\Gamma_{mn},\nonumber\\
\left\{\tilde{Q}_+,\tilde{Q}_+\right\}=&-2\mathcal{C}\Gamma^+ \tilde{P}_+,\nonumber\\
\left\{\tilde{Q}_-,\tilde{Q}_-\right\}=&-2\mathcal{C}\Gamma^-\tilde{P}_- - \sqrt{2}\mathcal{C}\Gamma^-I\Gamma^{ij}\tilde{J}_{ij}+ \frac{1}{\sqrt{2}}\mathcal{C}\Gamma^-I\Gamma^{i'j'}\tilde{J}_{i'j'},\nonumber\\
\left\{\tilde{Q}_+,\tilde{Q}_-\right\}=&\left(-2\mathcal{C}\Gamma^m \tilde{P}_m-4\mathcal{C}I\Gamma^i \tilde{P}^*_i-2\mathcal{C}I\Gamma^{i'} \tilde{P}_{i'}^*\right)\tilde{\mathcal{P}}_-.\nonumber
\end{align}
These commutation relations correspond to those of the super-PP-wave algebra (we have adopted the same notation of Ref. \cite{ppwave1}).

We have thus reached the super-PP-wave algebra in $D=11$ by performing an infinite $S$-expansion with ideal subtraction, starting from the super-$AdS_4\times S^{7}$, reproducing, in this way, a generalized In\"on\"u-Wigner contraction in the case of a supersymmetric algebra.

\section{Comments and possible developments} \label{Comments}

The $S$-expansion method has the peculiarity of being able to reproduce the standard In\"on\"u-Wigner contraction as a special case of the procedure called $0_S$-reduction (see Ref. \cite{Izaurieta}).

Furthermore, as shown in Ref. \cite{Izaurieta}, the information on the subspace structure of the original (super)algebra can be used in order to find resonant subalgebras of the $S$-expanded
(super)algebra and, by extracting reduced algebras from
the resonant subalgebra, one can reproduce the generalized In\"on\"u-Wigner contraction within this scheme.

In the present work, we have given a new prescription for $S$-expansion, based on using an \textit{infinite} abelian semigroup $S^{(\infty)}$ and by performing the subtraction of an infinite ideal subalgebra. We have explicitly shown that the subtraction of the infinite ideal subalgebra corresponds to a reduction, leading to a reduced (super)algebra.
In particular, it can be viewed as a (generalization of the) $0_S$-reduction, since it reproduces the same effects. This method can be also interpreted as a different, alternative view of the generalized In\"on\"u-Wigner contraction. Indeed, our ``infinite $S$-expansion" procedure allows to reproduce the standard as well as the generalized In\"on\"u-Wigner contraction. The removal of the infinite ideal is crucial, since it allows to end up with finite dimensional Lie (super)algebras.

This infinite $S$-expansion procedure represents an extension and generalization of the finite one, allowing a deeper view on the maps linking different algebras.

We have then explained how write the invariant tensors for the (super)algebras which can be obtained by applying our method of infinite $S$-expansion with ideal subtraction.
This procedure allows to develop the dynamics and construct the Lagrangians of physical theories. 
In particular, in this context the construction of Chern-Simons forms becomes more accessible, and it would be particularly interesting to develop them in (super)gravity theories in higher dimensions, by following our approach.

This work reproduces the results already presented in the literature, concerning expansions and contractions of Lie (super)algebras, and also gives some new features. 
Moreover, it gives further connections between the contraction processes and the expansion methods, which was an open question already mentioned in Ref. \cite{Khasanov}.

On the other hand, this paper can also contribute to understand the nature of the semigroups used in the $S$-expansion processes, since a contraction is usually applied over a physical constant. 

In our work, we have restricted our study to the cases involving an infinite semigroup $S^{(\infty)}$ related to the set $(\mathbb{N},+)$. We leave a possible extension to the set $(\mathbb{Z},+)$ to future works. 
This further analysis would be interesting, since it would produce an $S$-expansion involving an abelian group (with respect to the sum operation), rather than a semigroup.

Another possible development of our work would consists in applying the method developed in the present paper to the case studied in Ref. \cite{krishnan}, where the authors showed that interpreting the inverse $AdS_3$ radius $1/l$ as a Grassmann variable results in a formal map from gravity in $AdS_3$ to gravity in flat space. 
The underlying reason for this relies in the In\"{o}n\"{u}-Wigner contraction. 
They systematically developed the possibility that In\"{o}n\"{u}-Wigner contraction could be turned into an algebraic operation through a Grassmann approach. 

We argue that this could present a strong chance that a generalization of their algebraic approach should also work in our case as well, and it would thus be interesting to extend our method by exploiting a Grassmann-like approach. Some work is in progress on this topic.

\section*{Acknowledgments}

We are grateful to L. Andrianopoli, R. D'Auria, and M. Trigiante for the stimulating discussions and the constant support. 
The authors also wish to thank F. Lingua, R. Caroca, N. Merino, P.K. Concha, and E.K. Rodr\'{i}guez for the enlightening suggestions. 
One of the authors (D. M. Pe\~{n}afiel) was supported by grants from the \textit{Comisi\'{o}n Nacional de Investigaci\'{o}n Cient\'{i}fica y Tecnol\'{o}gica} CONICYT and from the \textit{Universidad de Concepci\'{o}n}, Chile, and wishes to acknowledge L. Andrianopoli, R. D'Auria, and M. Trigiante for their kind hospitality at DISAT - \textit{Dipartimento di Scienza Applicata e Tecnologia} of the \textit{Polytechnic of Turin}, Italy.


\end{document}